\newtheorem{theorem}{Theorem}[section]
\newtheorem{lemma}[theorem]{Lemma}
\newtheorem{definition}[theorem]{Definition}
\newtheorem{remark}[theorem]{Remark}
\newtheorem{example}[theorem]{Example}
\newtheorem{prop}[theorem]{Proposition}
\newcommand{\cross}{\times}
\newcommand{\set}[1]{\left\{ #1 \right\}}
\newcommand{\union}{\bigcup}
\newcommand{\intersect}{\bigcap}
\newcommand{\sm}{\setminus}
\renewcommand{\hat}{\widehat}
\renewcommand{\tilde}{\widetilde}
\def\min{\qopname\relax n{min}}
\def\max{\qopname\relax n{max}}
\def\Pr{\qopname\relax n{\mathbf{Pr}}}
\def\Ex{\qopname\relax n{\mathbf{E}}}
\def\A{\mathcal{A}}
\def\I{\mathcal{I}}
\def\J{\mathcal{J}}
\def\R{\mathcal{R}}
\def\X{\mathcal{X}}
\def\sse{\subseteq}
\newcommand{\eat}[1]{}
\newenvironment{lp*}{\begin{equation*}  \begin{array}{lll}}{\end{array}\end{equation*}}
\newcommand{\sdc}{SDC }
\title{Succinct Coverage Oracles}
\author{Ioannis Antonellis$^1$, Anish Das Sarma$^2$, Shaddin Dughmi$^1$\\
 $^1$Stanford University, $^2$Yahoo Research\\
  $^1${\tt\{antonell,shaddin\}@cs.stanford.edu}, $^2${\tt anishdas@yahoo-inc.com}\\
} 
\date{}
\begin{document}

\maketitle
\thispagestyle{empty}

\begin{abstract}
 
  In this paper, we identify a fundamental algorithmic problem that we
  term {\em succinct dynamic covering} (SDC), arising in many
  modern-day web applications, including ad-serving and online
  recommendation systems in eBay and Netflix.  Roughly speaking, \sdc\
  applies two restrictions to the well-studied Max-Coverage
  problem~\cite{garey}: Given an integer $k$, $\X=\{1,2,\ldots,n\}$
  and $\I=\{S_1, \ldots, S_m\}$, $S_i\subseteq \X$, find $\J\subseteq
  \I$, such that $|\J|\leq k$ and $ (\union_{S\in \J} S)$ is as large
  as possible. The two restrictions applied by \sdc\ are: (1) {\em
    Dynamic:} At query-time, we are given a \emph{query} $Q\subseteq
  \X$, and our goal is to find $\J$ such that $Q\bigcap (\union_{S\in
    \J} S)$ is as large as possible; (2) {\em Space-constrained:} We
  don't have enough space to store (and process) the entire input;
  specifically, we have $o(mn)$, and maybe as little as
  $O((m+n)polylog(mn))$ space. A solution to SDC maintains a small data
  structure, and uses this datastructure to answer {\em most} dynamic queries with high
  accuracy. We call such a scheme a \emph{Coverage Oracle}.

  We present algorithms and complexity results for coverage oracles. We present
  deterministic and probabilistic near-tight upper and lower bounds on
  the approximation ratio of \sdc as a function of the amount of
  space available to the oracle. Our lower bound results show that to obtain
  constant-factor approximations we need $\Omega(mn)$
  space. Fortunately, our upper bounds present an explicit tradeoff
  between space and approximation ratio, allowing us to determine the
  amount of space needed to guarantee certain accuracy. 
\end{abstract}
\newpage

\section{Introduction}
\label{sec:intro}

The explosion of data and applications on the web over the last decade have given rise to many new data management challenges. This paper identifies a fundamental subproblem inherent in several Web applications, including online recommendation systems, and serving advertisements on webpages. Let us begin with a motivating example.

\begin{example}
Consider the online movie rental and streaming website, Netflix\footnote{www.netflix.com}, and one of their users Alice. Based on Alice's movie viewing (and rating) history, Netflix would like to recommend new movies to Alice for watching. (Indeed, Netflix threw open a million-dollar challenge on significantly improving their movie recommendations~\footnote{http://www.netflixprize.com/}.) Conceivably, there are many ways of devising algorithms for recommendation ranging from data mining to machine learning techniques, and indeed there has been a great deal of such work on providing personalized recommendations (see~\cite{tuzhilin} for a survey). Regardless of the specific technique, an important subproblem that arises is finding users ``similar'' to Alice, i.e., finding users who have independently or in conjunction viewed (and liked) movies seen (and liked) by Alice.

Abstractly speaking, we are given a {\em universal set} of all Netflix movies, and Netflix users identified by the {\em subset} of movies they have viewed (and liked or disliked). Given a specific user Alice, we are interested in finding (say $k$) other users, who together {\em cover} a large set of Alice's likes and dislikes. Note that for each user, the set of movies that need to be covered is different, and therefore the covering cannot be performed {\em statically}, independent of the user. In fact, Netflix {\em dynamically} provides movie recommendations as users rate movies in a particular genre (say comedy), or request movies in specific languages, or time periods. Providing recommendations at interactive speed, based on user queries (such as for a particular genre), rules out computationally-expensive processing over the entire Netflix data, which is very large.\footnote{Netflix currently has over 10 millions users, over 100,000 movies, and obviously some of the popular movies have been viewed by many users, and movie buffs have rated a large number of movies; Netflix owns over 55 million discs.} Therefore, we are interested in {\em approximately} solving the aforementioned covering problem based on a {\em subset} of the data. 

The main challenge that arises is to {\em statically} identify a {\em subset} of the data that would provide good approximations to the covering problem for any {\em dynamic} user query.
\end{example}

\noindent Note that a very similar challenge arises in other recommendation systems, such as when Alice visits an online shopping website like eBay\footnote{www.ebay.com} or Amazon\footnote{www.amazon.com}, and the website is interested in recommending products to Alice based on her current query for a particular brand or product, and her prior purchasing (and viewing) history.

The example above can be formulated as an instance of a simple
algorithmic covering problem, generalizing the NP-hard optimization
problem \emph{max k-cover}~\cite{garey}. The input to this problem is
an integer $k$, a set $\X= \set{1,\ldots,n}$, a family $\I \sse 2^\X$
of subsets of $\X$, and \emph{query} $Q \sse X$. Here $(\X,\I)$ is
called a \emph{set system}, $\X$ is called the \emph{ground set} of
the set-system, and members of $\X$ are called \emph{elements} or
\emph{items}.  We make no assumptions on how the set system is
represented in the input, though the reader can think of the obvious
representation by a $n \times m$ bipartite graph for intuition. This
$n \times m$ bipartite graph can be stored in $O(nm)$ bits, which is
in fact information-theoretically optimal for storing an arbitrary set
system on $n$ items and $m$ sets. The objective of the problem is to
return $\J \sse \I$ with $|\J| \leq k$ that collectively cover as much
of $Q$ as possible. Since this problem is a generalization of max
k-cover, it is NP-hard. Nevertheless, absent any additional
constraints this problem can be approximated in polynomial time by a
straightforward adaptation of the greedy algorithm for max
k-cover \footnote{The greedy algorithm for max k-cover, adapted to our
  problem, is simple: Find the set in $\I$ covering as many uncovered
  items in $Q$ as possible, and repeat this $k$ times. This can
  clearly be implemented in $O(mn)$ time, and has been shown to yield
  a $e/(e-1)$ approximation.}, which attains a constant factor
$\frac{e}{e-1}$ approximation in $O(mn)$ time \cite{nwf78}. However, we
further constrain solutions to the problem as follows, rendering new
techniques necessary.

From the above example, we identify two properties that we require of
any system that solves this covering problem:
\begin{enumerate}
\item {\bf  Space Constrained:} We need to (statically) preprocess the
  set system $(\X,\I)$ and store a small sketch (much smaller than
  $O(mn)$), in the form of a \emph{data structure}, and discard the
  original representation of $(\X,\I)$. This can be thought of as a
  form of lossy compression. We do not require the data
  structure to take any particular form; it need only be a sequence of
  bits that allows us to extract information about the original set
  system $(\X,\I)$. For instance, any statistical summary, a subgraph
  of the bipartite graph representing the set system, or other
  representation is acceptable.
\item {\bf Dynamic}: The query $Q$ is not known a-priori, but arrive
  \emph{dynamically}. More precisely: $Q$ arrives \emph{after} the
  data structure is constructed and the original data discarded. It is
  at that point that the data structure must be used to compute a
  solution $\J$ to the covering problem.
\end{enumerate}

We call this covering problem (formalized in the next section) the
{\em Succinct Dynamic Covering} (SDC) problem. Moreover, we call a
solution to SDC a \emph{Coverage Oracle}. A coverage oracle consists
of a \emph{static stage} that constructs a datastructure, and a
\emph{dynamic stage} that uses the datastructure to answer queries.

Next we briefly present another, entirely different, Web application
that also needs to confront \sdc.  In addition, we note that there
are several other applications facing similar covering problems,
including gene identification~\cite{genomics}, searching domain-specific aggregator sites like Yelp~\footnote{www.yelp.com}, topical query decomposition~\cite{tqd}, and search-result diversification~\cite{CG98,CK06}.

\begin{example}
  Online advertisers bid on (1) webpages matching relevancy criteria
  and (2) typically target a certain user demographic. Advertisements
  are served based on a combination of the two criterion above. When a
  user visits a particular webpage, there is usually no precise
  information about the users' demographic, i.e., age, location,
  interests, gender, etc. Instead, there is a {\em range} of possible
  values for each of these attributes, determined based on the search
  query the user issued or session information. Ad-servers therefore
  attempt to pick a set of advertisements that would be of interest
  (i.e., ``cover'') a large number of users; the user demographic that
  needs to be covered is determined by the page on which the
  advertisement is being placed, the user query, and session
  information. Therefore, ad-serving is faced with the \sdc\
  problem. The space constraint arises because the set system
  consisting of all webpages, and each user identified by the set of
  webpages visited by the user is prohibitively large to store in memory and
  process in real-time for every single page view. The dynamic aspect
  arises because each user view of each page is associated with a
  different user demographic that needs to be covered.  
\end{example}

\subsection{Contributions and Outline}

Next we outline the main contributions of this paper.

\begin{itemize}

\item In Section~\ref{sec:defns} we formally define the succinct dynamic covering (SDC) problem, and summarize our results.

\item In Section~\ref{sec:general_ub} we present a randomized
  coverage oracle for \sdc. The oracle is presented as a
  function of the available space, thus allowing us to tradeoff space
  for accuracy based on the specific application. Unfortunately, the
  approximation ratio of this oracle degrades rapidly as space
  decreases; However, the next section shows that this is in fact
  unavoidable.

\item In Section  \ref{sec:general_lb} we present a lowerbound on
  the best possible approximation attainable as a function of the
  space allowed for the datastructure. This lowerbound essentially
  matches the upperbound of Section \,\ref{sec:general_ub}, though
  with the caveat that the lowerbound is for oracles that do not use
  randomization. We expect the lowerbound to hold more generally for
  randomized oracles, though we leave this as an open question.

\end{itemize}

\noindent Related work and future directions are presented in
Section~\ref{sec:conclusions}.

\subsection{Related Work}

Our study of the tradeoff between space and approximation ratio is in the spirit of the  work of Thorup and Zwick \cite{thorupzwick} on \emph{distance oracles}. They  considered the problem of compressing a graph $G$ into a
small datastructure, in such a way that the datastructure can be used
to approximately answer queries for the distance between pairs of
nodes in $G$. Similar to our results, they showed matching upper and
lower bounds on the space needed for compressing the graph subject to
preserving a certain approximation ratio. Moreover, similarly to our
upperbounds for SDC, their distance oracles benefit from a
speedup at query time as approximation ratio is sacrificed for space.

Previous work has studied the set cover problem under streaming models. One model studied in~\cite{AAASTOC2003,NBESA2005} assumes that the sets are known in advance, only elements
arrive online, and, the algorithms do not know in advance which subset
of elements will arrive. An alternative model assumes that elements
are known in advance and sets arrive in a streaming
fashion~\cite{SGSDM2009}. Our work differs from these works in that
SDC operates under a storage budget, so all sets cannot be stored;
moreover, SDC needs to provide a good cover for {\em all} possible
dynamic query inputs. 

Another related area is that of nearest neighbor search. It is easy to
see that the \sdc~problem with $k=1$ corresponds to nearest neighbor
search using the dot product similarity measure, i.e.,
$sim_{dot}(x,y) = \frac{dot(x,y)}{n}$. However, following from  a
result from Charikar~\cite{CSTOC2002}, there exists  no locality
sensitive hash function family for the dot product similarity
function. Thus, there is no hope that signature schemes (like
minhashing for the Jaccard distance) can be used for \sdc.

\section{SDC}
\label{sec:defns}

We start by defining the succinct dynamic covering (SDC) problem in Section~\ref{subsec:defn}. Then, in \eat{Section~\ref{subsec:timespeedup} we present an initial result that defies the hope that an offline data structure can significantly speedup covering, thus motivating the study of SDC. In }Section~\ref{subsec:summary} we summarize the main technical results achieved by this paper.

\subsection{Problem Definition}
\label{subsec:defn}

We now formally define the SDC problem.

\begin{definition}[SDC]
  Given an offline input consisting of a set system $(\X,\I)$ with $n$
  \emph{elements} (a.k.a \emph{items}) $\X$ and $m$ \emph{sets} $\I$, and an integer $k\geq
  1$, devise a {\em coverage oracle} such that given a dynamic
  \emph{query} $Q\subseteq \X$, the oracle finds a $\J\subseteq \I$ such that
  $|\J|\leq k$ and $(\bigcup_{S\in \J} S)\bigcap Q$ is 
  as large as possible.
\end{definition}

\begin{definition}[Coverage Oracle]
  A {\em Coverage Oracle}  for  SDC
  consists of two stages:
\begin{enumerate}
\item {\bf Static Stage:} Given integers $m$,$n$,$k$, and set
  system $(\X,\I)$ with $|\X|=n$ and $|\I|=m$,   build a
  datastructure ${\cal D}$.
\item {\bf Dynamic Stage:} Given a a dynamic \emph{query} $Q \sse \X$,
  use ${\cal D}$ to return $\J\subseteq \I$ with $|\J| \leq k$ as a solution to SDC.
\end{enumerate}
\end{definition}

\noindent Note that our two constraints on a solution for SDC are
illustrated by the two stages above. (1) We are interested in building
an offline data structure ${\cal D}$, and only use ${\cal D}$ to
answer queries. Typically, we want to maintain a {\em small} data
structure, certainly $o(mn)$, and maybe as little as
$O((m+n)polylog(mn))$ or even $O(m+n)$. Therefore, we cannot store the
entire set system. (2) Unlike the traditional max-coverage problem
where the entire set of elements $\X$ need to be covered, in SDC we
are given queries dynamically. Therefore, we want a coverage oracle that returns good solutions for all queries.

Given the space limitation of SDC, we cannot hope to exactly solve
SDC (for all dynamic input queries). The goal of this paper is to
explore {\em approximate} solutions for SDC, given a specific space
constraint on the offline data structure ${\cal D}$. We define the
\emph{approximation ratio} of an oracle as the worst-case, taken over
all inputs, of the ratio between the coverage of $Q$ by the optimal
solution and the coverage of $Q$ by the output of the oracle. We allow
the approximation ratio to be a function of $n$, $m$, and $k$, and
denote it by $\alpha(n,m,k)$. 

More precisely, given a coverage oracle $\A$, if on inputs
$k,\X,\I,Q$ (where implicitly $n=|\X|$ and $m=|\I|$) the oracle $\A$
returns $\J \sse \I$, we denote the size of the coverage as
$\A(k,\X,\I,Q) := |(\union_{S \in \J} S) \intersect Q|$. Similarly, we
denote the coverage of the optimal solution by $OPT(k,\X,\I,Q) :=
\max\{|(\union_{S \in \J^*} S) \intersect Q|: \J^* \sse \I, |\J^*| \leq
k\}$.  We then express the \emph{approximation ratio} $\alpha(n,m,k)$ as
follows.

\[ \alpha(n,m,k) = \max
\frac{OPT(k,\X,\I,Q)}{\A(k,\X,\I,Q)} \]

Where the maximum above is taken over set systems $(\X,\I)$ with $|\X|
= n$ and $|\I|=m$, and queries $Q \sse \X$.

We will also be concerned with \emph{randomized} coverage oracles. Note that, when we devise randomized coverage oracle,
we use randomization only in the static stage; i.e. in the
construction of the datastructure. We then let the \emph{expected approximation ratio}
be the worst case \emph{expected} performance of the oracle as
compared to the optimal solution.

\begin{align} \label{eq:approx}
  \alpha(n,m,k) = \max \left(\Ex\left[\frac{OPT(k,\X,\I,Q)}{\A(k,\X,\I,Q)}\right]\right)
\end{align}

The expectation in the above expression is over the random coins
flipped by the static stage of the oracle, and the maximization is over $\X,\I,Q$ as
before. We elaborate on this benchmark in Section \ref{sec:general_ub}.

We study the space-approximation tradeoff; i.e., how the (expected)
approximation ratio improves as the amount of space allowed for ${\cal
  D}$ is increased. In our lowerbounds, we are not specifically concerned
with the time taken to compute the datastructure or answer
queries. Therefore, our lowerbounds are purely
\emph{information-theoretic}: we calculate the amount of information
we are required to store if we are to guarantee a specific
approximation ratio, independent of computational concerns. Our
lowerbounds are particularly novel and striking in that \emph{they
  assume nothing} about the datastructure, which may be an arbitrary
sequence of bits. We establish our lowerbounds via a novel application of the probabilistic method that may be of independent interest. 

Even though we focus on space vs approximation, and not on runtime,
fortunately the coverage oracles in our upperbounds can be
implemented efficiently (both static and dynamic stage). Moreover,
using our upperbounds to trade approximation for space yields, as a
side-effect, an improvement in runtime when answering a query. In
particular, observe that if no sparsification of the data is done
up-front, then answering each query using the standard greedy
approximation algorithm for max k-cover \cite{nwf78} takes $O(mn)$
time. Our oracles, presented in Section \ref{sec:general_ub}, spends
$O(mn)$ time up-front building a data structure of size $O(b)$, where
$b$ is a parameter of the oracle between $n$ and $nm$. In the dynamic
stage, however, answering a query now takes $O(b)$, since we use the
greedy algorithm for max k-cover on a ``sparse'' set
system. Therefore, the dynamic stage becomes \emph{faster} as we
decrease size of the data structure. In fact, this increase in speed
is not restricted to an algorithmic speedup as described above. It is
likely that there will also be speedup due to architectural reasons,
since a smaller amount of data needs to be kept in memory. Therefore,
trading off approximation for space yields an incidental speedup in
runtime which bodes well for the dynamic nature of the queries.


\subsection{Summary of results}
\label{subsec:summary}

Table~\ref{table:results_summary} summarizes the main results obtained
in this paper for SDC input with $n$ elements, $m$ sets, and integer
$k\geq 1$. The lower bound in the table is for any nonnegative
constants $\delta_1,\delta_2$ not both $0$, and the randomized
upperbound is parameterized by $\epsilon$ with $0\leq \epsilon \leq
1/2$. The upper and lower bounds are developed in
Sections~\ref{sec:general_ub} and~\ref{sec:general_lb}
respectively. 

\begin{table}
\caption{Summary of results for SDC giving the approximation-ratio, the space constraint on the coverage oracle, and whether the nature of the bound: upper bound (UB) or lower bound (LB) and deterministic (Det.) or randomized (Rand.)}

\begin{center}
{\small

\begin{tabular}{|c|c|c|}
\hline
{\bf Approximation Ratio} & {\bf Storage} & {\bf Bound} \\
\hline

$O\left(\min\left(\frac{m}{k},\sqrt{\frac{n}{k}}\right)\right)$ & $\tilde{O}(n)$ & Det. UB\\

\hline

$O \left(\min\left(\frac{m^\epsilon}{\sqrt{k}},\sqrt{\frac{n}{k}}\right)\right)$ & $\tilde{O}(nm^{1-2\epsilon})$ & Rand. UB\\

\hline

$\Omega\left(\min\left(\frac{m^{\epsilon-\delta_1}}{k\sqrt{k}},\frac{n^{1/2 -
      \delta_2}}{k\sqrt{k}}\right)\right)$ & $\tilde{O}(nm^{1-2\epsilon})$ & Det. LB\\


 \hline
\end{tabular}
}
\label{table:results_summary}
\end{center}
\end{table}

\eat{
\begin{table*}
\caption{Summary of Results. The notation $\left(\alpha,\beta \right)$ corresponds to a solution that gives $\alpha$-approximation and requires $\beta$ space.}
\begin{center}

\begin{tabular}{|c|c|c|}
\hline 
\multicolumn{3}{|c|} {\textbf{Problem instance: $\left(G(n,m), k\right)$}} \\
\hline
\multicolumn{3}{|c|} {\textbf{Lower bound}} \\
\hline
 & $m=2$ & $m > 2$ \\
\hline
$k=1$ & - & $(O(nm^{1-2\epsilon}),\Omega(m^{-\epsilon}))$ \eat{$\left(\frac{1}{m^{0.1 + \epsilon}}, nm^\epsilon \right)$} randomized, $\left(\frac{1}{m}, n\right)$ determenistic\\
\hline
$k>1$ & - & $\left(\frac{k}{m}, n \right)$ determenistic\\
\hline
\hline
\multicolumn{3}{|c|} {\textbf{Upper bound}} \\
\hline
 &  $m = 2$ &  $m>2$\\
\hline
$k=1$ & $\left(\frac{3}{4}, n\right)$ & $\left(\frac{1}{m}, O(n+m)\right)$ deterministic, $\left(\frac{2^{1/\epsilon}}{m}, nm^{1-\epsilon}\right)$ randomized\\
\hline
$k>1$ & - & $\left(\frac{k}{m}, O(m+n)\right)$ determenistic\\
\hline
\end{tabular}
\label{table:results_summary}
\end{center}
\end{table*}
}
\section{Upper Bounds}
\label{sec:general_ub}

In this section, we show a coverage oracle that trades off space
and approximation ratio.  We designate a tradeoff parameter
$\epsilon$, where $0 \leq \epsilon \leq 1/2$.  For any such
$\epsilon$, we get an
$O\left(\frac{\min(m^\epsilon,\sqrt{n})}{\sqrt{k}}\right)$-approximate
coverage oracle that stores $\tilde{O}(nm^{1-2\epsilon})$ bits.
Therefore, setting a small value of $\epsilon$ achieves a better
approximation ratio, at the expense of storage space. As is common
practice, we use $\tilde{O}()$ to denote suppressing polylogarithmic
factors in $n$ and $m$; this is reasonable when the guarantees are
super-polylogarithmic, as is the case here.

The oracle we show is randomized, in the sense that the static stage
flips some random coins.  The datastructure constructed is a random
variable in the internal coin flips of the static stage of the
oracle. We measure the \emph{expected approximation ratio} (a.k.a
approximation ratio, when clear from context) of the oracle, as
defined in Equation \eqref{eq:approx}. For every fixed query $Q$
independent of the random coins used in constructing the
datastructure, this ratio is attained in expectation. In
other words, our adversarial model is that of an \emph{oblivious
  adversary}: someone trying to fool our oracle may choose any query
they like, but their choice cannot depend on knowledge of the random
choices made in constructing the datastructure.

In Section~\ref{sec:general_lb} we will see that our oracle attains a
space-approximation tradeoff that is essentially optimal when compared
with oracles that are deterministic. In other words, no deterministic
oracle can do substantially better. We leave open the questions of whether a better
randomized oracle is possible, and whether an equally good
deterministic oracle exists.

\subsection{Main Result and Roadmap}

The following theorem states the main result of this section.

\begin{theorem}\label{thm:general_ub}
  For every $\epsilon$ with $0 \leq \epsilon \leq 1/2$, there is a
  randomized coverage oracle for SDC that achieves an
  $O\left(\frac{\min(m^\epsilon,\sqrt{n})}{\sqrt{k}}\right)$ approximation and
    stores $\tilde{O}(nm^{1-2\epsilon})$ bits.
 \end{theorem}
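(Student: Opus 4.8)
The plan is to take the datastructure to be a \emph{random sparsification} of the incidence structure of $(\X,\I)$, and to answer a query by running the ordinary greedy max $k$-cover algorithm on the stored sparse system. Concretely, include each element of $\X$ in a random ``retained'' set $R \subseteq \X$ independently with probability $\rho$, chosen so that storing $\{S_i \cap R\}_{i}$ (together with the sizes $|S_i|$) costs $\tilde{O}(nm^{1-2\epsilon})$ bits; since $\sum_i |S_i \cap R| \le m|R|$, this forces $\rho = \tilde{\Theta}(m^{-2\epsilon})$ (an essentially equivalent variant subsamples each set to a capped size, or subsamples incidences directly). Because this scheme carries little information when $m^{\epsilon}$ is large compared to $\sqrt{n}$ (then $R$ is essentially empty), I would additionally store a deterministic auxiliary structure of $\tilde{O}(n)$ bits responsible for the $\sqrt{n}$ regime and for queries of tiny optimal value (for instance one pointing each element to a representative set containing it, or a small greedily chosen sub-family). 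On a query $Q$, the dynamic stage runs greedy max $k$-cover restricted to $Q$ on the union of the stored structures and outputs the at most $k$ chosen indices $\J$. Since every stored set is a subset of the corresponding true set, the true coverage $|(\bigcup_{S \in \J} S) \cap Q|$ is at least the coverage computed within the stored system, and the running time is linear in the stored size, which also yields the query speedup advertised in the introduction.

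With this in place, correctness reduces to a single statement about sparsification. By monotonicity and the $\tfrac{e}{e-1}$ guarantee of greedy, the output covers at least $(1-\tfrac1e)$ times the value of the best $k$-cover of $Q$ \emph{achievable within the stored system}, so it suffices to prove the \textbf{key lemma}: for every query $Q$ fixed in advance of the random coins, with high probability the best $k$-cover of $Q$ within the stored system has value at least $\Omega\!\big(\tfrac{\sqrt{k}}{\min(m^\epsilon,\sqrt{n})}\big)\cdot OPT(k,\X,\I,Q)$. The obliviousness of the adversary is exactly what lets us fix $Q$ first and then union-bound over the $m$ sets, or over the $\binom{m}{k}$ candidate $k$-subsets, rather than over all $2^{n}$ queries.

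To prove the key lemma I would argue as follows. Let $\J^{*}=\{S^{*}_1,\dots,S^{*}_k\}$ be optimal for $Q$ with $C^{*}:=OPT(k,\X,\I,Q)$, and write $C^{*}=\sum_{j} a_j$, where $a_j$ is the number of elements of $Q$ covered by $S^{*}_j$ but not by $S^{*}_1,\dots,S^{*}_{j-1}$; the corresponding private parts $A_j \sse S^{*}_j \cap Q$ are pairwise disjoint. Call $S^{*}_j$ \emph{heavy} if $a_j$ exceeds a threshold $\theta=\tilde{\Theta}(1/\rho)$. For a heavy set, $\ex[|(S^{*}_j\cap R)\cap A_j|]=\rho a_j=\tilde{\Omega}(\log m)$, so by a Chernoff bound and a union bound over the at most $k\le m$ sets, each heavy set retains a constant fraction of its private part inside $R$ w.h.p.; since the $A_j$ are disjoint these retained pieces sit disjointly inside the stored system and hence lower-bound its best $k$-cover. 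The light private parts contribute at most $k\theta=\tilde{O}(km^{2\epsilon})$ in total. I would then split into two regimes: if $C^{*}\gtrsim k\theta$ the heavy parts already account for a constant fraction of $C^{*}$ and survive up to polylog factors, giving an essentially constant-factor bound; if $C^{*}\lesssim k\theta$ the retained sample may be worthless, and one falls back on the auxiliary structure, which certifies a $k$-cover of value at least that of the single best set, i.e.\ $\ge C^{*}/k$, and, via the component responsible for the $\sqrt{n/k}$ bound, at least $\Omega(C^{*}\sqrt{k/n})$. Balancing $\theta$ and $\rho$ against these two regimes, and using the pigeonhole fact that the $\sqrt{k}$ heaviest of OPT's $k$ sets already carry an $\Omega(1/\sqrt{k})$ fraction of $C^{*}$ (so one need only recover those), is what produces the claimed $\tilde{O}(\min(m^\epsilon,\sqrt{n})/\sqrt{k})$ bound; I would leave the arithmetic of this optimization to the write-up.

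The step I expect to be the real obstacle is precisely this balancing. Taken by itself the survival argument is weak: a heavy OPT-set of size $n$ retains only a $\rho=\tilde{\Theta}(m^{-2\epsilon})$ fraction of its private part, so the crude bound is only $\tilde{O}(m^{2\epsilon})$, a factor $m^{\epsilon}\sqrt{k}$ worse than the target. Getting down to $\tilde{O}(m^{\epsilon}/\sqrt{k})$ requires genuinely combining the sampling guarantee with the ``few heavy sets'' / best-single-set argument and, very likely, a more refined (e.g.\ degree-aware or multi-resolution) sampling scheme so that sparse portions of sets are retained more faithfully, as well as a careful design of the $\tilde{O}(n)$-bit auxiliary structure, which is also what supplies the $\sqrt{n}$ half of the $\min$. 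Two further technicalities to handle: converting the in-expectation storage bound into a worst-case one (truncation plus a Chernoff bound on $\sum_i |S_i\cap R|$), and ensuring the union bound over $k$-subsets does not leak an extra factor of $k$ into the additive error terms.
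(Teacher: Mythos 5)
There is a genuine gap, and you have located it yourself: with a uniform retention rate $\rho=\tilde{\Theta}(m^{-2\epsilon})$ the survival argument only yields a $\tilde{O}(m^{2\epsilon})$ approximation, and the ``balancing'' you defer to the write-up is exactly where the actual idea of the proof lives. The paper closes this gap not by a more careful optimization over your $\theta$ and $\rho$, but by changing the sampling scheme: the static stage first runs a greedy peeling phase that \emph{exactly} stores the residual (still-uncovered) items of every set that contains at least $\tfrac{n}{m^{\epsilon}\sqrt{k}}$ uncovered items --- there are at most $m^{\epsilon}\sqrt{k}$ such ``significant'' sets and their stored residuals partition at most $n$ items, so this costs only $\tilde{O}(n)$ bits --- and only then samples a \emph{capped number} $\tfrac{n}{m^{2\epsilon}}$ of uncovered items from each remaining set. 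Because every remaining set has at most $\tfrac{n}{m^{\epsilon}\sqrt{k}}$ uncovered items, each such item is retained with probability at least $\tfrac{n/m^{2\epsilon}}{n/(m^{\epsilon}\sqrt{k})}=\tfrac{\sqrt{k}}{m^{\epsilon}}$, i.e.\ the peeling phase boosts the per-item retention rate from your $m^{-2\epsilon}$ to $\sqrt{k}/m^{\epsilon}$ at no extra storage cost. The case analysis is then on where OPT's coverage lives: if at least half of it is on peeled (``significant'') items, pigeonhole over the at most $m^{\epsilon}\sqrt{k}$ disjoint stored residuals yields $k$ of them covering a $\sqrt{k}/m^{\epsilon}$ fraction; otherwise linearity of expectation over the retained items gives the bound directly, with no Chernoff bound, no heavy/light threshold, and no union bound over $\binom{m}{k}$ subsets (the guarantee is only in expectation against an oblivious adversary, which is all the theorem claims). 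Your heavy/light decomposition of OPT's private parts is attacking the problem from the wrong side: significance in the paper is a property of the data structure's construction, not of OPT relative to $Q$.

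A second, smaller gap is the $\sqrt{n/k}$ half of the $\min$. An auxiliary structure ``pointing each element to a representative set containing it'' with arbitrary tie-breaking only gives an $m/k$ approximation (this is the paper's warm-up remark). To get $O(\sqrt{n/k})$ you need the representative partition to be built by the \emph{greedy} set-cover ordering, so that every set of $\I$ contains at most $\sqrt{n/k}$ items lying in small partition classes; the analysis then again splits on whether OPT's coverage sits in classes of size at least $\sqrt{n/k}$ (pigeonhole over at most $\sqrt{nk}$ such classes) or not (in which case $OPT\le 2\sqrt{nk}$ and any $k$ partition classes meeting $Q$ suffice). You assert the existence of ``the component responsible for the $\sqrt{n/k}$ bound'' but never construct it, and your fallback bound of $C^{*}/k$ from the best single set is not enough on its own.
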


\noindent The remainder of this section, leading up to the above result, is organized as follows. 
 Before proving Theorem \ref{thm:general_ub}, to build intuition we
 show in Section~\ref{subsec:simple} (Remark \ref{rem:det_ub}) a much simpler deterministic oracle, with a much weaker
 approximation guarantee. Then, we
 prove Theorem \,\ref{thm:general_ub} in two parts. First, in Section~\ref{subsec:part1}, we show a
 randomized coverage oracle that stores
 $\tilde{O}(nm^{1-2\epsilon})$ bits and achieves an
 $O(m^\epsilon/\sqrt{k})$ approximation in expectation. Then, in Section~\ref{subsec:part2}, we show a
 deterministic oracle that achieves a $O(\sqrt{n}/\sqrt{k})$
 approximation and stores $\tilde{O}(n)$ bits. Combining the two
 oracles into a single oracle in the obvious way yields Theorem~\ref{thm:general_ub}.
 
\subsection{Simple Deterministic Oracle} 
\label{subsec:simple}

 \begin{remark}\label{rem:det_ub}
   There is a simple deterministic oracle that attains a $m/k$
   approximation with $\tilde{O}(n)$ space. The static stage proceeds
   as follows: Given set system $(\X,\I)$, for each $i\in \X$ we
   ``remember'' one set $S \in \I$ with $i \in S$ (breaking ties
   arbitrarily). In other words, for each $S \in \I$ we define $\hat{S}
   \sse S$ such that $\set{ \hat{S} : S \in \I}$ is a partition of
   $\X$. We then store the ``sparsified'' set system $\left(\X,\hat{\I}=
   \set{\hat{S} : S \in \I}\right)$. It is clear that this can be done in
   linear time by a trivial greedy algorithm. Moreover, $(\X,\hat{\I})$
   can be stored in $\tilde{O}(n)$ space as a $n \cross m$ bipartite
   graph with $n$ edges.

   The dynamic stage is straightforward: when given a query $Q$, we
   simply return the indices of the $k$ sets in $\hat{\I}$ that
   collectively cover as much of $Q$ as possible. It is clear that
   this gives a $m/k$ approximation. Moreover, since $\hat{\I}$ is a
   partition of $\X$, it can be accomplished by a trivial greedy
   algorithm in polynomial time.
 \end{remark}

 Next we use randomization to show a much better, and much more
 involved, upperbound that trades off approximation and space.

\subsection{ An $O(m^\epsilon/\sqrt{k})$ Approximation with
  $\tilde{O}(nm^{1-2\epsilon})$ Space}

\label{subsec:part1}

   Consider the set system $(\X,\I)$, where $\X$ is the set of items and
   $\I$ is the family of sets. We assume without loss that each item is
   in some set. We define a randomized oracle for building a
   datastructure, which is a ``sparsified'' version of $(\X,\I)$. Namely,
   for every $S \in \I$ we define $\hat{S} \sse S$, and store the set
   system $\left(\X,\hat{\I} = \set{\hat{S}}_{S \in \I}\right)$. We
   require that $(\X,\hat{\I})$ can be stored in $\tilde{O}(n m^{1-2\epsilon})$
   space. We construct the datastructure in two stages, as follows.

 \begin{itemize}
   \item Label all items in $\X$ ``uncovered'' and all sets in $\I$ ``unchosen''
   \item Stage 1: While there exists an unchosen set $S\in \I$
     containing at least $\frac{n}{m^\epsilon \sqrt{k}}$ uncovered items
     \begin{itemize}
     \item Let $\hat{S}$ be the set of unchosen items in $S$.
     \item Relabel all items in $\hat{S}$ as ``covered'' and  ``significant''
     \item Relabel $S$ as ``chosen'' and ``significant''
     \end{itemize}
   
   \item Stage 2: For every remaining ``unchosen'' set $S$
     \begin{itemize}
     \item Choose $\frac{n}{m^{2 \epsilon}}$ ``uncovered'' items $\hat{S} \sse S$
       uniformly at random from the uncovered items in $S$ (if fewer
       than $\frac{n}{m^{2 \epsilon}}$ such items, then let $\hat{S}$ be
       all of them).
     \item Relabel each item in $\hat{S}$ as ``covered'' and  ``insignificant''
     \item Relabel $S$ as ``chosen'' and ``insignificant''
     \end{itemize}
   \item Label every uncovered item as ``uncovered'' and  ``insignificant''

 \end{itemize}

 When presented with a query $Q \sse \X$, we use the stored
 datastructure $(\X,\hat{\I})$ in the obvious way: namely, we find
 $\hat{S_1},\ldots,\hat{S_k} \in \hat{\I}$ maximizing $|(\union_{i=1}^k
 \hat{S_i}) \intersect Q|$, and return the name of the corresponding
 original sets $S_1,\ldots,S_k$. However, this problem cannot be
 solved exactly in polynomial time in general. Nevertheless, we can
 instead use the greedy algorithm for max-k-cover to get a constant-factor
 approximation \cite{nwf78}; this will not affect our asymptotic
 guarantee on the approximation ratio. The following two lemmas complete the proof that the above oracle achieves an $O(m^\epsilon/\sqrt{k})$ approximation with
  $\tilde{O}(nm^{1-2\epsilon})$ space.
 
 \begin{lemma}
   The datastructure $(\X,\hat{\I})$ can be stored using $\tilde{O}(n m^{1-2
     \epsilon})$ bits.
 \end{lemma}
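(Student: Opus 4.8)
The plan is to show that the sparsified set system $(\X,\hat\I)$ has only $\tilde O(n m^{1-2\epsilon})$ edges in its bipartite-graph representation, which then takes $\tilde O(n m^{1-2\epsilon})$ bits to store (e.g. as an adjacency list, with $O(\log(nm))$ bits per edge). So the whole task reduces to bounding $\sum_{S\in\I}|\hat S|$. I would split this sum according to which stage produced $\hat S$: the ``significant'' sets chosen in Stage 1, and the ``insignificant'' sets processed in Stage 2.

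For Stage 2 the bound is immediate from the construction: each such $\hat S$ has size at most $\frac{n}{m^{2\epsilon}}$ by definition, and there are at most $m$ sets in total, so the Stage-2 contribution is at most $m\cdot \frac{n}{m^{2\epsilon}} = n m^{1-2\epsilon}$. For Stage 1 the key observation is that the sets $\hat S$ chosen there are pairwise disjoint — each $\hat S$ consists precisely of the items that are still ``uncovered'' at the moment $S$ is chosen, and those items are immediately relabeled ``covered,'' so they cannot appear in any later $\hat S'$. Hence $\sum_{S \text{ chosen in Stage 1}} |\hat S| \le |\X| = n$. Adding the two contributions gives $\sum_{S}|\hat S| \le n + n m^{1-2\epsilon} = O(n m^{1-2\epsilon})$ edges, and therefore $\tilde O(n m^{1-2\epsilon})$ bits once we account for the $O(\log(nm))$-bit cost of naming each endpoint.

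The only mild subtlety — and the one point I would state carefully rather than wave at — is the bookkeeping that each item contributes to at most one $\hat S$ across \emph{both} stages, so that there is genuinely no double counting: an item relabeled ``covered'' in Stage 1 is skipped when Stage 2 samples ``uncovered'' items, and an item covered in Stage 2 is likewise not re-selected. Thus in fact $\{\hat S : S \in \I\}$ together with the leftover uncovered items forms a partition of $\X$ refined by stage, and the edge count is at most $n$ from Stage-1 sets plus at most $n m^{1-2\epsilon}$ from Stage-2 sets. I do not anticipate a real obstacle here; the lemma is essentially a counting argument, and the main thing to get right is simply invoking disjointness of the Stage-1 selections and the per-set cap in Stage 2, then converting the edge count to a bit count with the standard logarithmic overhead absorbed into $\tilde O(\cdot)$.
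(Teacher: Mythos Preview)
Your proof is correct and follows essentially the same approach as the paper: both bound the number of edges in the bipartite representation by separately counting the Stage-1 contribution (at most $n$, since those $\hat S$'s are disjoint) and the Stage-2 contribution (at most $m\cdot n m^{-2\epsilon}=nm^{1-2\epsilon}$, by the per-set cap), then absorb the logarithmic encoding overhead into the $\tilde O(\cdot)$. Your write-up is slightly more explicit about disjointness and the bit-encoding step, but the argument is the same.
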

 \begin{proof}
   We store the set system as a bipartite graph representing the
   containment relation between items and sets. To show that the
   bipartite graph can be stored in the required space, it suffices to
   show that $(\X,\hat{\I})$ is ``sparse''; namely, that
   the total number of edges $(x,\hat{S}) \in \X \times \hat{\I}$ such
   that $x \in \hat{S}$ is $O(n m^{1-2 \epsilon})$. We account for the
   edges created in stages 1 and 2 separately.
   \begin{enumerate}
   \item Every significant item is connected to a single set. This
    creates at most $n$ edges.
  \item For every insignificant set, we store at most $nm^{-2\epsilon}$
    items. This creates at most $m n m^{-2\epsilon} = n m^{1-2
      \epsilon}$ edges.
   \end{enumerate}
 \end{proof}

 \begin{lemma}
   For every query $Q$, the oracle returns sets
   $S_1,\ldots,S_k$ such that \[\Ex[|(\union_{i=1}^k S_i) \intersect
   Q|] \geq \frac{|(\union_{i=1}^kS_i^*) \intersect
     Q|}{O(m^\epsilon/\sqrt{k})}\] for any $S_1^*,\ldots,S^*_k \in \I$. 
 \end{lemma}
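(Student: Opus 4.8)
The plan is to show that the sparsified system $(\X,\hat\I)$ admits, for the given query $Q$, a collection of $k$ sets whose union covers an $\Omega(\sqrt{k}/m^\epsilon)$ fraction of $\mathrm{OPT}:=|(\union_i S_i^*)\intersect Q|$ \emph{in expectation}. Since the dynamic stage runs the greedy max-$k$-cover algorithm on $(\X,\hat\I)$ — losing only a constant factor — and returns the \emph{original} sets (and $S_i\supseteq\hat S_i$, so coverage only increases), this immediately gives the lemma: if $\mathrm{OPT}_{\hat\I}(Q)$ denotes the optimal $k$-cover of $Q$ inside $(\X,\hat\I)$, then for each outcome of the coins the returned sets cover $\ge(1-1/e)\,\mathrm{OPT}_{\hat\I}(Q)$, so it suffices to prove $\Ex[\mathrm{OPT}_{\hat\I}(Q)]=\Omega(\sqrt{k}/m^\epsilon)\cdot\mathrm{OPT}$.

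First I would fix $S_1^*,\dots,S_k^*$, put $A=(\union_i S_i^*)\intersect Q$, and, reassigning each item of $A$ to one optimal set containing it, write $A=\bigsqcup_j A^{(j)}$ with $A^{(j)}\sse S_j^*$. Two structural observations drive everything: (i) once an item is marked ``covered'' it is never re-selected, so $\{\hat S:S\in\I\}$ partitions the covered items, and each item lies in \emph{at most one} $\hat S$; and (ii) Stage 1 runs at most $m^\epsilon\sqrt{k}$ iterations, since each iteration freshly covers $\ge n/(m^\epsilon\sqrt{k})$ items, so there are at most $m^\epsilon\sqrt{k}$ ``significant'' sets. Let $A_1$ be the items of $A$ covered during Stage 1 and $A_2=A\sm A_1$. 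The key preliminary point is that every optimal set meeting $A_2$ must be insignificant: were $S_j^*$ significant, processing it in Stage 1 would put all of $A^{(j)}$ into $A_1$ (its still-uncovered items directly, its already-covered items via the earlier significant set holding them). Hence when such an $S_j^*$ is processed in Stage 2 it has fewer than $n/(m^\epsilon\sqrt{k})$ uncovered items, so each of its then-uncovered items is kept in $\hat S_j^*$ with probability at least $\frac{n/m^{2\epsilon}}{n/(m^\epsilon\sqrt{k})}=\sqrt{k}/m^\epsilon$ (and with probability $1$ if $S_j^*$ has at most $n/m^{2\epsilon}$ uncovered items when processed). I will work in the meaningful regime $\sqrt{k}\le m^\epsilon$; otherwise the claimed ratio is $O(1)$ and the bound follows directly from $A_1$ or $A_2$.

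Then I would split into two cases. \emph{Case $|A_1|\ge\mathrm{OPT}/2$:} each item of $A_1$ sits in exactly one of the $\le m^\epsilon\sqrt{k}$ significant sets, so the $k$ significant sets with the most $A_1$-items already cover at least $\frac{k}{m^\epsilon\sqrt{k}}|A_1|=\frac{\sqrt{k}}{m^\epsilon}|A_1|\ge\frac{\sqrt{k}}{2m^\epsilon}\mathrm{OPT}$ of $A_1\sse Q$ (covering all of $A_1$ if fewer than $k$ significant sets exist); this is a deterministic witness. \emph{Case $|A_2|\ge\mathrm{OPT}/2$:} I would use the sparsified optimal sets $\hat S_1^*,\dots,\hat S_k^*$ as the witness. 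For $i\in A^{(j)}\intersect A_2$ still uncovered when $S_j^*$ is processed in Stage 2 we have $\Pr[i\in\hat S_j^*]\ge\sqrt{k}/m^\epsilon$; summing over $j$ (using disjointness of the $A^{(j)}$) gives $\Ex[|(\union_j\hat S_j^*)\intersect A_2|]\ge\frac{\sqrt{k}}{m^\epsilon}(|A_2|-\Ex[Z])$, where $Z$ counts items of $A_2$ that are covered \emph{before} their own optimal set is processed. If $\Ex[Z]\le|A_2|/2$ this is $\ge\frac{\sqrt{k}}{4m^\epsilon}\mathrm{OPT}$ and we are done; otherwise I must fold the intercepting sets into the witness collection.

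The hard part will be precisely this ``much interception'' sub-case. The levers I expect to combine are: observation (i), which makes the $\hat S$'s pairwise disjoint; the Stage-2 quota, capping every insignificant set at $n/m^{2\epsilon}$ retained items; and the a-priori bound $|A_2|\le\sum_{j}|S_j^*\cap(\text{uncovered at end of Stage 1})|<k\cdot\frac{n}{m^\epsilon\sqrt{k}}=\frac{n\sqrt{k}}{m^\epsilon}$. Together these should force the sets that intercept a non-negligible part of $A_2$ to be \emph{few}, so that adding a handful of them to the witness collection stays within the budget $k$ and does not cost a second factor of $\sqrt{k}/m^\epsilon$. I expect this to be cleanest if Stage 2 processes the remaining sets in nonincreasing order of current uncovered-count: then any set processed before an optimal $S_j^*$ has at least as many uncovered items as $S_j^*$ then does, so either it is ``saturated'' and retains exactly $n/m^{2\epsilon}$ items (and, being disjoint from all the others, there are at most $m^{2\epsilon}$ such saturated sets in all), or it retains all its remaining items; and in the latter situation $S_j^*$ itself has $<n/m^{2\epsilon}$ uncovered items when processed, so it keeps \emph{all} of them. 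Finally, combining the two cases yields $\Ex[\mathrm{OPT}_{\hat\I}(Q)]=\Omega(\sqrt{k}/m^\epsilon)\mathrm{OPT}$, and the greedy step gives $\Ex[|(\union_i S_i)\intersect Q|]\ge(1-\tfrac1e)\Ex[\mathrm{OPT}_{\hat\I}(Q)]=\mathrm{OPT}/O(m^\epsilon/\sqrt{k})$, as required.
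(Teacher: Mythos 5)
Your overall architecture is the paper's: reduce to exhibiting $k$ sets of $\hat{\I}$ with good expected coverage (losing only a constant to greedy, and using $\hat{S}\sse S$ to pass back to the original sets), split $OPT$ into the mass on significant versus insignificant items, handle the significant half by pigeonhole over the at most $m^\epsilon\sqrt{k}$ significant sets, and handle the insignificant half via the per-item selection probability $\sqrt{k}/m^\epsilon$. Your Case 1 and your structural observations (i)--(ii) are correct and match the paper.

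The gap is exactly where you flag it: the ``much interception'' sub-case is asserted, not proved, and it cannot be waved away. If Stage 2 samples from the \emph{currently} uncovered items of each set in an arbitrary order (the literal reading of the pseudocode), the witness $\hat{S}_1^*,\dots,\hat{S}_k^*$ can fail badly and so can ``folding in a few interceptors'': take $k=1$, let $S_1^*=Q$ with $|Q|=n/m^\epsilon-1$ (insignificant), and add $|Q|$ interceptor sets, each of size exactly $n/m^{2\epsilon}$, containing one distinct element of $Q$ plus junk. Processed before $S_1^*$, each interceptor deterministically keeps all of its uncovered items, so afterwards $\hat{S}_1^*=\emptyset$ and every $\hat{T}\in\hat{\I}$ meets $Q$ in at most one element, while $OPT=|Q|$; the intercepted mass is spread over $|Q|$ disjoint sets, so no $O(1)$ (or even $o(|Q|)$) collection of interceptors recovers it. Your proposed repair --- processing Stage 2 in nonincreasing order of uncovered count --- is a modification of the stated algorithm, and you do not carry out the analysis for it. The paper's own proof sidesteps all of this by asserting outright that each insignificant element of $S_i^*$ lands in $\hat{S}_i^*$ with probability at least $(n/m^{2\epsilon})/(n/(m^\epsilon\sqrt{k}))=\sqrt{k}/m^\epsilon$; that assertion is only valid if each insignificant set's sample is drawn from its items left uncovered \emph{at the end of Stage 1} (independently across sets, with no intra--Stage-2 interception), under which reading Case 2 closes immediately with witness $S_i=S_i^*$ and the space bound is unaffected. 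So either you adopt that reading of the construction, in which case your $Z$ is identically zero and your proof is complete and essentially identical to the paper's, or you keep your reading, in which case the lemma needs a genuinely new argument (or an amended algorithm) that your proposal does not supply.
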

 Note that $S_1,\ldots,S_k$ are random variables in the internal
 coin-flips of the static stage that constructs the datastructure. The
 expectation in the statement of the lemma is over these random coins.
 
 \begin{proof}
   We fix an optimal choice for $S^*_1,\ldots,S^*_k \in \I$, and denote
   $OPT=|(\union_{i=1}^k S^*_i) \intersect Q|$. Since, by
   construction, $\hat{S} \sse S$ for all $S \in \I$, it suffices to
   show that the output of the oracle satisfies $|(\union_{i=1}^k
   \hat{S_i}) \intersect Q| \geq \frac{OPT}{O(m^\epsilon/\sqrt{k})}$
   in expectation. Moreover, since the dynamic stage algorithm finds a
   constant factor approximation to $\max\{|(\union_{i=1}^k \hat{S_i})
   \intersect Q|: \hat{S_1}, \ldots, \hat{S_k} \in \hat{\I}\}$, it is
   sufficient to show that there \emph{exists}
   $S_1,\ldots,S_k \in \I$ with $\Ex[|(\union_{i=1}^k
   \hat{S_i}) \intersect Q|] \geq \frac{OPT}{O(m^\epsilon/\sqrt{k})}$.

 We distinguish two cases, based
   on whether most of the items $(\union_{i=1}^k S^*_i) \intersect Q$
   covered by the optimal solution are in significant or insignificant
   sets. We use the ``significant'' and ``insignificant'' designation
   as used in the static stage algorithm. Moreover, we refer to
   $\hat{S} \in \hat{\I}$ as significant (insignificant, resp.) when the
   corresponding $S\in \I$ is significant (insignificant, resp.).
   \begin{enumerate}
   \item {\bf At least half of $ (\union_{i=1}^k S_i^*) \intersect Q$
       are significant items}: Notice that, by construction, there are
     at most $m^\epsilon\sqrt{k}$ significant sets in
     $\hat{\I}$. Moreover, the significant items are precisely those
     covered by the significant sets of $\hat{\I}$, and those sets form
     a partition of the significant items. Therefore, by the
     pigeonhole principle there are there are some $\hat{S_1}, \ldots,
     \hat{S_k} \in \hat{\I}$ such that $\union_{i=1}^k \hat{S_i}$
     contains at least an $\frac{k}{m^\epsilon \sqrt{k}} =
     \frac{\sqrt{k}}{m^{\epsilon}}$ fraction of the significant items
     in $(\union_{i=1}^k S_i^*) \intersect Q$. This gives the desired
     $O(m^\epsilon / \sqrt{k})$ approximation.
   \item {\bf At least half of $(\union_{i=1}^k S_i^*) \intersect Q$
       are insignificant items}: In this case, at least half the items
     $(\union_{i=1}^k S_i^*) \intersect Q$ covered by the optimal
     solution are contained in the insignificant members of
     $\{S_1^*,\ldots,S^*_k\}$.
     Recall that any insignificant set in $\I$ contains at most
     $\frac{n}{m^\epsilon \sqrt{k}}$ insignificant items. Therefore,
     the algorithm includes each element of an insignificant $S_i^*$
     in $\hat{S_i^*}$ with probability at least $\frac{n}{m^{2
         \epsilon}} / \frac{n}{m^\epsilon\sqrt{k}}$, which is at least
     $\sqrt{k}/m^{\epsilon}$. Thus, every insignificant item in
     $(\union_{i=1}^k S_i^*)$ is in $(\union_{i=1}^k \hat{S_i^*})$ with
     probability at least $\sqrt{k}/m^{\epsilon}$. This gives that the expected size of
     $(\union_{i=1}^k \hat{S_i^*}) \intersect Q$ is at least
     $\frac{OPT}{O(m^\epsilon / \sqrt{k})}$. Taking $S_i=S^*_i$ completes the proof.
   \end{enumerate}
 \end{proof}

\subsection{An $O(\sqrt{n/k})$ Approximation with $\tilde{O}(n)$ Space}
\label{subsec:part2}

This coverage oracle is similar to the one in the previous
section, though is much simpler. Moreover, it is
deterministic. Indeed, we construct the datastructure by the following
greedy algorithm that resembles the greedy algorithm for max-k-cover

 \begin{itemize}
   \item Label all items in $\X$ ``uncovered'' and all sets in $\I$ ``unchosen''
   \item While there are unchosen sets
     \begin{itemize}
     \item Find the unchosen set $S \in \I$ containing the most
       uncovered items
     \item Let $\hat{S}$ be the set of uncovered items in $S$.
     \item Relabel all items in $\hat{S}$ as ``covered''
     \item Relabel $S$ as ``chosen''
     \end{itemize}
   \end{itemize}

   Observe that $\hat{\I}$ is a partition of $\X$.  When presented with
   a query $Q \sse \X$, we use the datastructure
   $(\X,\hat{\I}=\set{\hat{S}: S \in \I})$ in the obvious way. Namely, we
   find the sets $\hat{S_1}, \ldots, \hat{S_k} \in \hat{\I}$ maximizing
   $|(\union_{i=1}^k \hat{S_i}) \intersect Q|$, and output the
   corresponding non-sparse sets $S_1,\ldots,S_k$. This can easily be
   done in polynomial time by using the obvious greedy algorithm,
   since $\hat{\I}$ is a partition of $\X$. 
   
   Note that the oracle described above is very similar to the oracle
   from Section~\ref{subsec:simple}: The dynamic stage is
   identical. The static stage, however, needs to build the partition
   using a specific greedy ordering -- as opposed to the arbitrary
   ordering used in Section \,\ref{subsec:simple}.  The following two
   Lemmas complete the proof that the oracle achieves an
   $O(\sqrt{n/k})$ approximation with $\tilde{O}(n)$ space.
   
\begin{lemma}
  The datastructure $(\X,\hat{\I})$ can be stored using $\tilde{O}(n)$ bits
\end{lemma}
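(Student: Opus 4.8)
The plan is to reduce the bound to the observation, already recorded just above the lemma, that $\hat{\I}$ is a partition of $\X$; once that is in hand the space estimate is essentially the same counting argument used for the simple oracle of Remark~\ref{rem:det_ub}.

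First I would verify the partition claim carefully, since everything hinges on it. The static loop runs until every set in $\I$ has been chosen, so it performs exactly $m$ iterations. Whenever a set $S$ is chosen, $\hat{S}$ is set to its currently-uncovered items and those items are immediately relabeled ``covered'', so an item enters at most one $\hat{S}$ — namely the first chosen set that contains it — which gives pairwise disjointness. Moreover, under the standing assumption that every item lies in some set of $\I$, each item is covered no later than the iteration in which the first set containing it is chosen; hence $\bigcup_{S \in \I} \hat{S} = \X$. Thus $\set{\hat{S} : S \in \I}$ is a partition of $\X$ into at most $m$ parts (some possibly empty).

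Given this, I would store the datastructure as the bipartite containment graph between $\X$ and $\hat{\I}$. Because $\hat{\I}$ is a partition, this graph has exactly $n$ edges — one per item — rather than the $\sum_{S} |S|$ edges (up to $\Theta(nm)$) one would need for an arbitrary set system. Equivalently, store for each $i \in \X$ the index in $\set{1,\ldots,m}$ of the unique set with $i \in \hat{S}$; this is $n \lceil \log_2 m \rceil$ bits, plus a header of $O(\log n + \log m)$ bits recording $n$, $m$, $k$ and delimiters. Nothing further is needed, since the dynamic stage only ever consults $(\X,\hat{\I})$. The total $O(n \log m + \log n)$ is $\tilde{O}(n)$ because $\tilde{O}(\cdot)$ suppresses factors polylogarithmic in $n$ and $m$ — precisely the regime flagged right after Theorem~\ref{thm:general_ub}.

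There is no real obstacle here; the only things to get right are the two halves of the partition claim (disjointness and exhaustiveness, the latter needing the ``each item is in some set'' assumption), and the remark that $\log m$ is absorbed by $\tilde{O}$, so that it is the edge \emph{count}, not the edge labels, that drives the bound.
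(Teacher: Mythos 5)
Your proof is correct and takes essentially the same route as the paper: the paper likewise observes that each item lies in exactly one $\hat{S} \in \hat{\I}$, so the bipartite containment graph has at most $n$ edges, giving $\tilde{O}(n)$ bits. You simply spell out the partition verification and the $n\log m$ encoding in more detail than the paper does.
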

\begin{proof}
  Observe that each item is contained in exactly one $\hat{S} \in
  \hat{\I}$. Therefore, the bipartite graph representing the set system
  $(\X,\hat{\I})$ has at most $n$ edges. This establishes the Lemma.
\end{proof}

\begin{lemma}
  For every query $Q$, the oracle returns sets $S_1,\ldots,S_k$
  with \[|(\union_{i=1}^k S_i)
  \intersect Q| \geq \frac{| (\union_{i=1}^k S_i^*) \intersect Q|}{O(\sqrt{n/k})}\] for any
  $S_1^*,\ldots,S^*_k \in \I$.
\end{lemma}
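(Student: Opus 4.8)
The plan is to mirror the two-case analysis of the preceding lemma, with the \emph{greedy ordering} playing the role that the threshold $n/(m^\epsilon\sqrt k)$ played there. Fix $S_1^*,\dots,S_k^*$; since the oracle's output does not depend on them and the right-hand side is largest when they maximize $|(\union_i S_i^*)\cap Q|$, we may assume they do so and write $OPT$ for this value. Set $t=\lceil\sqrt{nk}\rceil$, let $S_{(1)},\dots,S_{(m)}$ be the order in which the static greedy algorithm picks sets, and let $a_j=|\hat S_{(j)}|$ be the number of items freshly covered at step $j$. Two structural facts drive the argument. First, $a_1\ge a_2\ge\cdots$ is non-increasing: after a step, the chosen set holds no uncovered items and every other set's count of uncovered items only decreased, so the maximum marginal cannot increase. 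Since $\hat\I$ partitions $\X$ we have $\sum_j a_j=n$, hence $a_t\le\frac1t\sum_{j\le t}a_j\le n/t$ and therefore every set chosen at a step $j>t$ contributes at most $n/t$ items. Call the items covered at steps $\le t$ \emph{early} — they are partitioned among the ($\le t$) sets $\hat S_{(1)},\dots,\hat S_{(t)}$ — and the rest \emph{late}. Second, the analogue of ``insignificant sets are small'': for any fixed $S_i^*$, every item of $S_i^*\cap Q$ that ends up late is still uncovered just before step $t+1$ and lies in $S_i^*$, so $S_i^*$ has at least that many uncovered items at that moment; greedy's choice at step $t+1$ forces that count to be at most $a_{t+1}\le a_t\le n/t$. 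Summing over $i$, the late items of $(\union_i S_i^*)\cap Q$ number at most $kn/t$.

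Now split into two cases by whether at least half of $(\union_i S_i^*)\cap Q$ is early or late. If at least half is early, then the early items lying in $Q$, partitioned among at most $t$ of the $\hat S$'s, number at least $OPT/2$; by averaging, the $k$ of those sets with the largest intersection with $Q$ together cover at least $\frac{k}{t}\cdot\frac{OPT}{2}$ items of $Q$, and because $\hat\I$ is a partition the dynamic stage exactly maximizes $|(\union_{i}\hat S_i)\cap Q|$ and so does at least this well; since $t\le 2\sqrt{nk}$ this is coverage $\ge \frac{k\cdot OPT}{2t}=\frac{OPT}{O(\sqrt{n/k})}$. If instead at least half is late, then $OPT/2\le kn/t$, i.e. $OPT\le 2kn/t=O(\sqrt{nk})$, and we finish with a crude ``floor'' bound on the oracle's value $V$: either at most $k$ sets of $\hat\I$ meet $Q$, in which case the oracle selects all of them and, as $\hat\I$ covers $\X\supseteq Q$, gets $V=|Q|\ge OPT$; or more than $k$ sets meet $Q$, in which case each of the top $k$ contributes at least one item, so $V\ge k$ and $OPT/V\le 2kn/(tk)=O(\sqrt{n/k})$. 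Either sub-case finishes the late case, completing the proof.

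I expect the crux to be the second structural fact — quantifying that the greedy ordering prevents too much of any single optimal set from being deferred past step $t$ — together with the choice $t=\lceil\sqrt{nk}\rceil$ that balances the early-case loss $\Theta(t/k)$ against the late-case bound $\Theta(n/t)$. The remaining ingredients are just an averaging argument and the observation that a partition makes the dynamic stage exact. One should also note the degenerate regimes — $m<\sqrt{nk}$ (so there are no late items and only the early case occurs), and $k\ge n$ or $k\ge m$ (where the target ratio is $O(1)$ and the bound is immediate) — but these only make the estimates easier.
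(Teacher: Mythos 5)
Your proof is correct and follows essentially the same two-case argument as the paper's: the paper defines ``significant'' sets by the size threshold $|\hat S|\ge\sqrt{n/k}$ rather than by greedy rank $\le\lceil\sqrt{nk}\rceil$, but since the marginal gains are non-increasing these coincide up to constants, and the pigeonhole step in the first case and the bound $OPT=O(\sqrt{nk})$ plus the floor of $k$ covered items in the second case are identical. Your explicit handling of the degenerate sub-case $|Q|<k$ is in fact slightly more careful than the paper's.
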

\begin{proof}
  Fix an optimal choice of $S_1^*,\ldots,S^*_k$, and denote
  $OPT=|(\union_{i=1}^k S_i^*) \intersect Q|$. Recall that the oracle
  finds $\hat{S_1}, \ldots, \hat{S_k} \in \hat{\I}$ maximizing $|(\union_{i=1}^k
  \hat{S}_i) \intersect Q|$, and then outputs the corresponding
  original sets $S_1,\ldots,S_k$.

  It suffices to show that there are some $\hat{S}_1,\ldots,\hat{S}_k
  \in \hat{\I}$ with $|(\union_{i=1}^k \hat{S}_i) \intersect Q| \geq
  OPT/O(\sqrt{n/k})$. We distinguish two cases, based on whether most of
  $(\union_{i=1}^k S_i^*) \intersect Q$ are in big or small sets in $\hat{\I}$.

  Recall that $\hat{\I}$ forms a partition of $\X$. We say $\hat{S} \in
  \hat{\I}$ is ``significant'' if $|\hat{S}| \geq \sqrt{n/k}$, otherwise
  $\hat{S}$ is ``insignificant''. Similarly, we say an item $i \in \X$
  is ``significant'' if it falls in a significant set in $\hat{\I}$,
  otherwise it is ``insignificant''. Notice that there are at most
  $\frac{n}{\sqrt{n/k}} =\sqrt{nk}$ significant sets.

  First, we consider the case where at least half the items in
  $(\union_{i=1}^k) S_i^* \intersect Q$ are significant. Since there
  at most $\sqrt{nk}$ significant sets in $\hat{\I}$, by the pigeonhole
  principle there are $k$ of them that collectively cover a
  $k/\sqrt{nk} = \sqrt{k/n}$ fraction of all significant items in
  $(\union_{i=1}^k S_i^*) \intersect
  Q$. This would guarantee the $O(\sqrt{n/k})$ approximation, as needed.

  Next, we consider the case where at least half of $(\union_{i=1}^k
  S_i^*) \intersect Q$ are insignificant. By examining the greedy
  algorithm of the static stage, it is easy to see that each $S \in
  \I$ contains at most $\sqrt{n/k}$ insignificant items. Therefore,
  there are at most $k \cdot \sqrt{n/k} = \sqrt{nk}$ insignificant
  items in $(\union_{i=1}^k S_i^*)$. Therefore we deduce that
  $OPT=|(\union_i S_i^*) \intersect Q| \leq 2 \sqrt{kn}$. Since the
  optimal covers $O(\sqrt{kn})$ items in $Q$, it suffices for a
  $O(\sqrt{n/k})$ approximation to show that there are
  $\hat{S}_1,\ldots,\hat{S}_k \in \hat{\I}$ that collectively cover
  $k$ items of $Q$. It is easy to see that this is indeed the case,
  since $\hat{\I}$ is a partition of $\X$. This completes the proof.
\end{proof}

\section{Lower Bounds}
\label{sec:general_lb}

This section develops lower bounds for the SDC problem. We consider
deterministic oracles that store a data\-structure of size $b(n,m,k)$
for set systems with $n$ items, $m$ sets, maximum number of allowed
sets $k$. Moreover, we assume that $n \leq b(n,m,k) \leq nm$, since no
nontrivial positive result is possible when $b(n,m,k) = o(n)$, and a
perfect approximation ratio of $1$ is possible when $b(n,m,k) =\Omega(nm)$. 

\subsection{Main Result and Roadmap}

The main result of this section is stated in the following theorem, which says that our
randomized oracle in the previous section achieves a space-approximation
tradeoff that essentially matches the best possible for any
deterministic oracle. 

\begin{theorem}\label{thm:general_lb}
  Consider any deterministic oracle that stores a datastructure of
  size at most $b(n,m,k)$ bits, where $n \leq b(n,m,k) \leq nm$. Let
  $\epsilon(n,m,k)$ be such that $b(n,m,k) = n m^{1-2
    \epsilon(n,m,k)}$. When $m^{\epsilon(n,m,k)} \leq \sqrt{n}$, the
  oracle does not attain an approximation ratio of $O(\frac{m^{\epsilon(n,m,k)
    - \delta}}{k\sqrt{k}})$ for any constant $\delta > 0$. Moreover, when
  $\sqrt{n} \leq m^{\epsilon(n,m,k)}$ the oracle does not attain an
  approximation ratio of $O(\frac{n^{1/2 - \delta}}{k\sqrt{k}})$ for any $\delta >0$.
 \end{theorem}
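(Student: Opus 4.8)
The plan is to prove the lower bound by the probabilistic method together with an entropy/counting argument, treating the datastructure as an arbitrary string of $\le b:=b(n,m,k)$ bits, so that $H(\D)\le b$. I will construct a distribution over set systems of entropy $\gg b$, and argue that any oracle beating the stated ratio would force $\D$ to pin down a typical member of the support up to residual entropy $o(H(\I))$, contradicting $I(\I;\D)\le H(\D)\le b$. This is exactly the information-theoretic style promised in the introduction: nothing is assumed about the form of $\D$.

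For the regime $m^{\epsilon}\le\sqrt n$ I use ``gadget'' set systems. Fix $N$ pairwise-distinct pseudorandom $g$-subsets $G_1,\dots,G_N\sse\X$ (drawn as independent uniform $g$-sets and then frozen), with $g\approx m^{\epsilon-\delta}\log^2(nm)/\sqrt k$ and $N\approx n\,m^{1-2\epsilon+2\delta}/\log^2(nm)$. To sample $\I$, draw independent uniform $k$-subsets $\J_1,\dots,\J_N\sse[m]$ and set $S_i:=\union_{j:\,i\in\J_j}G_j$. Since $\I$ is a function of $(\J_1,\dots,\J_N)$ and this map is injective with probability $1-o(1)$ (one recovers $\J_j=\{i:G_j\sse S_i\}$ unless a rare containment accident occurs), $H(\I)=N\log\binom mk\,(1-o(1))$, which for this choice of $N$ is $\omega(b)$.

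The key step is a ``catastrophe'' claim. Take the query $Q=G_j$; the optimum picks the $k$ sets indexed by $\J_j$, all of which contain $G_j$, so $OPT=g$. A Chernoff/union bound over all pairs $(i,j)$ shows that with probability $1-1/\poly(nm)$ over $\I$ (call such $\I$ \emph{good}), every $S_i$ with $i\notin\J_j$ has $|S_i\intersect G_j|\le g\sqrt k/m^{\epsilon-\delta}$ simultaneously --- the exponents of $g$ and $N$ are chosen precisely so that $\Ex|S_i\intersect G_j|$ equals this quantity and this quantity is $\gg\log(nm)$. Hence for good $\I$, if the oracle's answer $g(\D,G_j)$ misses $\J_j$, its coverage is at most $k\cdot g\sqrt k/m^{\epsilon-\delta}=g\,k^{3/2}/m^{\epsilon-\delta}$, i.e. approximation ratio $\ge m^{\epsilon-\delta}/k^{3/2}$ on that instance; so an oracle of ratio $o(m^{\epsilon-\delta}/k^{3/2})$ must have $g(\D,G_j)\intersect\J_j\ne\emptyset$ for every good $\I$ and every $j$ (tracking constants handles $O(\cdot)$ rather than $o(\cdot)$). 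Now the entropy accounting: for fixed $\D$ the set $\J^{(j)}:=g(\D,G_j)$ is a $\le k$-subset of $[m]$ determined by $\D$, and $\J_j$ must meet it; since at most $\binom mk\cdot k^2/m$ of the $k$-subsets meet a fixed $\le k$-set, $H(\J_j\mid\D,\mathrm{good})\le\log\binom mk-\log(m/k^2)$, and by independence $H(\I\mid\D)\le H(\{\J_j\}\mid\D)\le N\log\binom mk-N\log(m/k^2)+o(N)$ (the $o(N)$ absorbs the $o(1)$-probability bad event, using $H(\I)\le N\log\binom mk$). Combining with $H(\I)\ge N\log\binom mk-o(N)$ gives $b\ge I(\I;\D)=H(\I)-H(\I\mid\D)\ge N\log(m/k^2)-o(N)$, which for our $N$ strictly exceeds $n\,m^{1-2\epsilon}=b$ --- a contradiction, so no oracle with $b$ bits attains ratio $O(m^{\epsilon-\delta}/(k\sqrt k))$. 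The regime $\sqrt n\le m^{\epsilon}$ is identical except that the largest achievable ``per-set gap'' is $\sqrt n$ (a single set on $n$ items cannot be forced below an $\Omega(1/\sqrt n)$ coverage fraction of a query the optimum covers entirely), so $g/|S_i\intersect G_j|$ is capped at $\sqrt n$ and one gets $\Omega(n^{1/2-\delta}/(k\sqrt k))$.

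The main obstacle is precisely making $I(\I;\D)$ large enough. The obvious instance --- let the query be a single set $S_{i^*}$, so the oracle must ``decode'' the index $i^*$ --- only forces $b=\Omega(m\log(m/k))$, because decoding one index out of $m$ per set reveals merely $\log(m/k)$ bits; this is far too weak once $n\,m^{1-2\epsilon}\gg m$ (which is the whole interesting range). The gadget construction circumvents this by using $N\gg m$ queries so that the oracle must commit a \emph{fresh} $\Omega(\log(m/k^2))$ bits per gadget; the delicate part, where the exponents of $g$ and $N$ (and the appearance of $k^{3/2}$ rather than $k^{1/2}$) get pinned down, is arranging that these $N$ commitments are simultaneously genuinely fresh (so $H(\{\J_j\}\mid\D)$ truly drops by $\Omega(N\log(m/k^2))$) and individually enforced by the approximation guarantee (a missed gadget is catastrophic), while keeping the ``catastrophe'' concentration event at probability $1-1/\poly(nm)$.
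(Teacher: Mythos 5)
Your proposal takes a genuinely different route from the paper's, and as far as I can tell it is sound and yields the same bounds. The paper draws \emph{two} set systems i.i.d.\ from a product distribution (each set a uniform $nm^{-\epsilon}$-subset, or $\sqrt{n}$-subset in the second regime, rescaled by $\sqrt{k}$ for general $k$), lower-bounds by $2^{-b}$ (resp.\ $2^{-2kb}$) the probability that all samples collide onto the same datastructure, and shows that with probability $>1-2^{-b}$ some ``fooling query'' (a union of sets with distinct indices from distinct samples) forces incompatible answers; the heart of that argument is combinatorial --- the absence of fooling queries would force $\Omega(m)$ pairs of sets with anomalously large intersections, an event of probability $<2^{-b}$ via an acyclic pair-graph to get independence, negative correlation plus Chernoff, and a union bound over the $m^m$ pair-sets $P$. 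You instead use a single sample from a \emph{planted} distribution of entropy $\omega(b)$ and replace the collision step by $I(\I;\D)\le H(\D)\le b$, with each of the $N$ gadget queries forcing the $\D$-determined answer to hit the planted index set $\J_j$ and thereby surrendering $\log(m/k^2)$ fresh bits of conditional entropy; your exponents reproduce the paper's thresholds, including the $k^{3/2}$ loss. What each buys: the paper needs no planted structure but pays with the delicate construction of $P$ and the $m^m$ union bound; you localize all the probabilistic work into one concentration event plus subadditivity of entropy, and --- since $I(\I;\D)\le b$ survives private coins --- your framework looks better positioned to attack the randomized lower bound the paper leaves open. Two points to tighten in a full write-up, neither fatal: (i) the gadgets must be frozen by an averaging argument before the entropy accounting, and the conditioning on the ``good'' event needs an explicit Fano-type correction (your $o(N)$ terms); (ii) your final inequality $N\log(m/k^2)>b$ requires $m^{2\delta}\gtrsim\log^2(nm)$ and $\log(m/k^2)=\Omega(\log m)$, which mildly restricts the asymptotic regime relative to the paper's (itself loose) ``$n,m$ sufficiently large'' and mirrors the paper's own footnote excluding $k$ close to $m$.
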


\noindent   The proof of the theorem above is somewhat involved.  
   Therefore, to simplify the
   presentation we prove in Section \ref{sec:general_lb_simplified} a
   slight simplification of Theorem \ref{thm:general_lb} that
   captures all the main ideas: Our simplification sets $k=1$, and proves the $O(\frac{m^{\epsilon(n,m,k)
    - \delta}}{k\sqrt{k}})$ approximation ratio, for $m^{\epsilon(n,m,k)} \leq \sqrt{n}$. Then, in Section~\ref{sec:general_lb_generalize1} we prove the approximation ratio for the case of $\sqrt{n} \leq m^{\epsilon(n,m,k)}$, still maintaining $k=1$. Finally, in Section~ \ref{sec:general_lb_generalize2}, we demonstrate how to modify our proofs for any $k$, yielding Theorem \ref{thm:general_lb}.

We fix $\delta >0$. For the remainder of the section, we use $b$ and $\epsilon$ as shorthand for
   $b(n,m,k)$ and $\epsilon(n,m,k)$, respectively. We let $\alpha(n,m,k)$ be
   the approximation ratio of the oracle, and use $\alpha$ as
   shorthand. Observe that $0 \leq \epsilon \leq 1/2$.

\subsection{Proof of a Simpler Lowerbound}
\label{sec:general_lb_simplified}
We simplify Theorem \ref{thm:general_lb} by assuming $k=1$ and
$m^\epsilon \leq \sqrt{n}$. The result is the following proposition,
stated using the shorthand notation described above.
\begin{prop}\label{prop:general_lb_simplified}
  Fix $k=1$ and parameter $\epsilon$ with $0 \leq \epsilon \leq
  1/2$. Assume $m^\epsilon \leq \sqrt{n}$. Consider any deterministic
  oracle that stores a datastructure of size at most
  $b=nm^{1-2\epsilon}$ bits. The oracle does not attain an
  approximation ratio of $O(m^{\epsilon - \delta})$ for any constant
  $\delta > 0$.
\end{prop}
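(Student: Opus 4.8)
\begin{proofsketch}
The plan is to argue information-theoretically, via the probabilistic method: exhibit a large family $\F$ of ``incompressible'' set systems and show that no $b$-bit datastructure can record enough about a typical member of $\F$ to answer all queries within a factor $O(m^{\epsilon-\delta})$. Fix parameters $\sigma$ and $\alpha_0$, polynomial in $n,m$ and tuned at the end (morally $\sigma\approx m^{\epsilon+\delta}$ and $\alpha_0\approx n/(m^{\epsilon+\delta}\log m)$). Let $\F$ be the set of systems $\I=\{S_1,\dots,S_m\}$ in which each $S_i\sse\X$ has $\card{S_i}=\sigma$ and every pair satisfies $\card{S_i\intersect S_j}<\sigma/(2\alpha_0)$ (a ``packing'' condition). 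A first, routine step is to check $\card{\F}$ is huge: drawing each $S_i$ independently and uniformly among $\sigma$-subsets makes each pairwise intersection hypergeometric with mean $\sigma^2/n$, so a Chernoff bound plus a union bound over the $\binom{m}{2}$ pairs shows a random system is packed with probability $\ge 1/2$; hence $\card{\F}\ge\tfrac12\binom{n}{\sigma}^m$.

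Next I would show that ``revealing'' queries force the oracle's hand. Fix any deterministic oracle storing $b$ bits; each value $D$ of its datastructure induces an answer function $A_D$ sending a query to the index in $[m]$ it outputs. Suppose toward a contradiction that the oracle attains ratio $\le\alpha:=c\cdot m^{\epsilon-\delta}$ on all inputs, for some constant $c$. If $\I\in\F$ is $\alpha$-approximated under $A_D$, consider any $i$ and any $Q\sse S_i$ with $\card{Q}>\theta:=\alpha\sigma/(2\alpha_0)$: then $OPT(1,\X,\I,Q)=\card{Q}$ (witnessed by $S_i$), while the packing condition gives $\card{S_j\intersect Q}\le\card{S_j\intersect S_i}<\sigma/(2\alpha_0)<\card{Q}/\alpha$ for every $j\neq i$, so $i$ is the \emph{only} $\alpha$-good answer and $A_D(Q)=i$ is forced. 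Consequently, for any integer $h$ with $\theta<h<\sigma$, \emph{every} $h$-subset of $S_i$ lies in $A_D^{-1}(i)$.

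It then remains to bound, for each fixed $D$, the number of $\I\in\F$ that $A_D$ can $\alpha$-approximate. Such an $\I$ is an $m$-tuple of $\sigma$-sets where $S_i$ must have all its $h$-subsets inside $A_D^{-1}(i)$; writing $a_i:=\card{A_D^{-1}(i)\cap\binom{\X}{h}}$ and $\phi_i:=a_i/\binom{n}{h}$, a Kruskal--Katona shadow bound limits the number of admissible $S_i$ to roughly $\binom{n}{\sigma}\phi_i^{\,\sigma/h}$. Since $A_D$ is a function on $h$-subsets, $\sum_i\phi_i\le1$, so by AM--GM the product over $i$ is at most $\binom{n}{\sigma}^m m^{-m\sigma/h}$. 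Summing over the $\le 2^b$ values of $D$, the number of $\alpha$-approximable systems in $\F$ is at most $2^b\binom{n}{\sigma}^m m^{-m\sigma/h}$, which is $<\tfrac12\binom{n}{\sigma}^m\le\card{\F}$ precisely when $b<\tfrac{m\sigma}{h}\log m-1$. Choosing $\sigma,\alpha_0$ (hence $\theta$) and $h$ so that simultaneously $\theta<h<\sigma$, the family is packed, and $\tfrac{m\sigma}{h}\log m>b=nm^{1-2\epsilon}$ — using the hypothesis $m^{\epsilon}\le\sqrt{n}$ to create room — we get a contradiction: some $\I\in\F$ is \emph{not} $\alpha$-approximated, so the ratio exceeds $c\,m^{\epsilon-\delta}$ on some input; as $c$ was arbitrary, the ratio is not $O(m^{\epsilon-\delta})$.

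I expect the counting step to be the main obstacle, specifically the simultaneous juggling of parameters: $\sigma$ and $\alpha_0$ must be small enough that $\F$ is packed and the revealing threshold $\theta$ sits below the sweet spot of the shadow exponent $\sigma/h$, yet large enough that $(m\sigma/h)\log m$ overtakes $b$. This trade-off is exactly where the factor-$m^{\delta}$ gap between $m^{\epsilon}$ and $m^{\epsilon-\delta}$ is consumed, and it is where the novel use of the probabilistic method enters; the Kruskal--Katona estimate and the $\sum_i\phi_i\le1$ budgeting are the technical heart, with the packing/concentration bookkeeping being routine.
\end{proofsketch}
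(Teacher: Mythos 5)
Your argument takes a genuinely different route from the paper's: the paper draws \emph{two} set systems i.i.d.\ from a distribution (each set uniform of size $nm^{-\epsilon}$), lower-bounds by $2^{-b}$ the probability that both are mapped to the same datastructure, and shows that with probability $>1-2^{-b}$ some query of the form $A_i\cup A'_{i'}$ forces \emph{different} answers on the two systems (via a Chernoff bound on pairwise intersections); the contradiction comes from a collision between whole systems, not from the oracle having to memorize individual sets. Your single-system encoding argument is internally coherent where it is explicit --- the forcing step for $h$-subsets of $S_i$ and the $\sum_i\phi_i\le 1$ budgeting with Kruskal--Katona are fine --- but the parameter juggling you defer to the end is not merely delicate: it is infeasible over most of the range of $n$ permitted by the hypothesis, so there is a genuine gap.

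Concretely: the counting step needs $\frac{m\sigma}{h}\log_2 m > b = nm^{1-2\epsilon}$, i.e.\ $h<\sigma m^{2\epsilon}\log_2 m/n$; the revealing step needs $h>\theta=\alpha\sigma/(2\alpha_0)$; and keeping $\card{\F}\ge\tfrac12\binom{n}{\sigma}^m$ via Chernoff plus a union bound over $\binom{m}{2}$ pairs forces the packing threshold $\sigma/(2\alpha_0)$ to be at least on the order of $\max\left(1,\sigma^2/n\right)$. Chaining $\Omega(\alpha\sigma^2/n)<h<\sigma m^{2\epsilon}\log m/n$ gives $\sigma=O(m^{2\epsilon}\log m/\alpha)=O(m^{\epsilon+\delta}\log m)$, and then $h\ge 1$ forces $n=O(m^{3\epsilon+\delta}\log^2 m)$ (the constraint $\theta=\Omega(\alpha)$ tightens this to $n=O(m^{2\epsilon+2\delta}\log^2 m)$). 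But $m^\epsilon\le\sqrt n$ is a \emph{lower} bound on $n$ and creates no room of this kind; e.g.\ $\epsilon=1/4$, $n=m^{10}$ is admissible, yet then you need $\sigma/h>m^{9.5}/\log_2 m$ while $h>\theta$ and the packing limit cap $\sigma/h$ at $O(n/(\alpha\sigma))$, which after optimizing over $\sigma$ caps the total extractable information at roughly $m\sqrt{n/\alpha}\log m$ bits --- far below $b=nm^{1-2\epsilon}$ once $n\gg m^{3\epsilon}$. The root cause is that your revealing queries lie inside a single $S_i$, so they only force the oracle to know each set up to the packing resolution; to cover the whole range one needs sets whose size scales with $n$ and queries admitting two \emph{conflicting} near-optimal answers, which is exactly what the paper's union-of-two-independent-samples construction provides.
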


We assume the approximation ratio $\alpha$ attained by the oracle is
$O(m^{\epsilon - \delta})$ and derive a contradiction. The proof uses
the probabilistic method (see \cite{alonspencer}). We begin by
defining a distribution on set systems, and then go on to show that
this distribution ``fools'' a small coverage oracle with positive
probability.

\subsubsection{Defining a Distribution $D$ on Set Systems}
\label{subsec:lb_distribution}
   We will show that there is a set system $(\X,\I)$ and a query $Q$
   that forces the algorithm to output a set $S\in \I$ that is not
   within $\alpha$ from optimal. We use the probabilistic
   method. Namely, we exhibit a distribution $D$ over set systems
   $(\X,\I)$ such that, for every deterministic oracle storing a
   datastructure of size $b$, there exists with non-zero probability a
   query $Q$ for which the oracle outputs a set of approximation worse
   than $\alpha$. To show this, we draw two set systems i.i.d from
   $D$, and show that with non-zero probability both the following
   hold: the two set systems are not distinguished by the
   coverage oracle, and moreover there exists a query $Q$ that
   requires that the algorithm return different answers for the two
   set systems for a $O(m^{\epsilon - \delta})$
   approximation. 

   We define $D$ as follows.  Given the ground set
   $\X=\set{1,\ldots,n}$, we let $\I=\set{A_i}_{i=1}^m$ and draw
   $A_1,\ldots,A_m$ i.i.d as follows: We let $A_i$ be a subset of $\X$
   of size $nm^{-\epsilon}$ drawn uniformly at random.

\subsubsection{Sampling twice from $D$ and collisions}
\label{subsec:lb_sampling_collisions}
   Next, we draw two set systems $(\X,\I=\set{A_i}_{i=1}^m)$ and
   $(\X,\I'=\set{A'_i}_{i=1}^m)$ i.i.d from $D$, as discussed
   above. First, we lowerbound the probability that $(\X,\I)$ and
   $(\X,\I')$ are not distinguished by the coverage oracle. We call
   such an occurence a ``Collision''.

   \begin{lemma}\label{lem:collision}
     The probability that the same datastructure is stored for $(\X,\I)$
     and $(\X,\I')$ is at least $2^{-b} $.
   \end{lemma}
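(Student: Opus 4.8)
The plan is to bound the collision probability by a simple counting/pigeonhole argument on the datastructure, which is the standard first move in such incompressibility-style lower bounds.

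First I would observe that the oracle's static stage is a deterministic function $f$ mapping each set system $(\X,\I)$ (on ground set $\X$ and $m$ sets) to a bitstring of length at most $b$. Since there are at most $\sum_{j=0}^{b} 2^j < 2^{b+1}$ (or, being a bit careful, at most $2^b$ if we think of it as exactly $b$ bits, or round up as needed) possible datastructures, $f$ partitions the support of $D$ into at most $2^b$ classes according to the datastructure produced. A ``Collision'' is exactly the event that the two i.i.d.\ draws $(\X,\I)$ and $(\X,\I')$ land in the same class.

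The key step is then a second-moment / convexity observation: if a distribution $D$ is partitioned into classes $C_1, C_2, \ldots$ with $D$-probabilities $p_1, p_2, \ldots$, then the probability that two independent draws land in the same class is $\sum_t p_t^2$, and by Cauchy–Schwarz (or power-mean) this is at least $(\sum_t p_t)^2 / (\#\text{classes}) = 1/(\#\text{classes}) \geq 2^{-b}$. That is exactly the claimed bound. I would state this cleanly: letting $q_d = \Pr_{(\X,\I)\sim D}[f(\X,\I) = d]$ over datastructures $d$, the collision probability is $\sum_d q_d^2 \geq \left(\sum_d q_d\right)^2 / |\{d : q_d > 0\}| \geq 1/2^b = 2^{-b}$, since there are at most $2^b$ distinct values of $d$.

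**The main obstacle** is really just bookkeeping: making sure the count of possible datastructures is at most $2^b$ (``size at most $b$ bits'' — one should either fix the length to exactly $b$ with padding, or note $\sum_{j\le b} 2^j \le 2^{b+1}$ and absorb the constant, which is harmless for the asymptotic argument that follows). There is no probabilistic subtlety beyond Cauchy–Schwarz; the determinism of the oracle is what makes $f$ well-defined, and the i.i.d.\ sampling is what turns $\sum q_d^2$ into the collision probability. I would write the proof in three or four lines along exactly these lines.
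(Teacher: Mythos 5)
Your proposal is correct and follows essentially the same route as the paper: the paper likewise notes there are at most $2^b$ possible datastructures, writes the collision probability of the two i.i.d.\ draws as $\sum_i p_i^2$, and lower-bounds this by $2^{-b}$ via the observation that the sum is minimized when the $p_i$ are uniform (your Cauchy--Schwarz step is the same convexity fact). The only cosmetic difference is your explicit bookkeeping about ``at most $b$ bits'' versus ``exactly $2^b$ datastructures,'' which, as you note, is harmless.
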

   \begin{proof}
     There are $2^b$ possible datastructures. Let $p_i$ denote the
     probability that, when presented with random $(\X,\I) \sim D$,
     the oracle stores the $i$'th datastructure. We can write this
     probability of ``collision'' of the two i.i.d samples $(\X,\I)$ and
     $(\X,\I')$ as $\sum_{i=1}^{2^b} p_i^2 $.  However, since $\sum_i
     p_i = 1$, this expression is minimized when $p_i = 2^{-b}$ for
     all $i$. Plugging into the above expression gives a lowerbound of
     $2^{-b}$, as required.
   \end{proof}

\subsubsection{Fooling Queries and Candidates}
\label{subsec:lb_fooling}
Next, we lowerbound the probability that a query $Q$ exists requiring
two different answers for $(\X,\I)$ and $(\X,\I')$ in order to get the
desired $\alpha = O(m^{\epsilon - \delta})$ approximation. We call such a query $Q$ a
\emph{fooling query}. We define a set of queries that are
``candidates'' for being a fooling query: A set $Q \sse \X$ is called a
\emph{candidate query} if $Q= A_i \union A'_{i'}$ for some $i \neq
i'$. In other words, a query is a candidate if it is the union of a set
from $(\X,\I)$ and a set from $(\X,\I')$ with different indices. 

Ideally, candidate $Q=A_i \union A'_{i'}$ would be a fooling query by forcing the oracle
to output $i$ for $(\X,\I)$ and $i'$ for $(\X,\I')$ in order to guarantee
the desired approximation. However, this need not be the case:
consider for instance the case when, for some $j\neq i,i'$, both $A_j$
and $A'_j$ have large intersection with $Q$, making it ok to output
$j$ for both. We will show that the probability that none of the
candidate queries is a fooling query is strictly less
than $2^{-b}$ when $n$ and $m$ are sufficiently large. Doing so would
complete the proof: collision occurs with probability $\geq 2^{-b}$,
and a fooling query exists with probability $> 1-2^{-b}$, and
therefore both occur simultaneously with positive probability. This
would yield the desired contradiction.

\subsubsection{The Probability that None of the Candidates is Fooling
  is Small}
\label{subsec:lb_fooling_likely}

We now upperbound the probability that none of the candidates is a
fooling query. Observe that if candidate $Q=A_i \union A'_{i'}$ is not
a fooling query, then there exists $A \in \I \union \I'
\sm\{A_i,A'_{i'}\}$ with $|A \intersect Q| \geq nm^{- \epsilon} /
\alpha$. Therefore one of the following must be true:
   \begin{enumerate}
   \item There exists $A \in \I \union \I' \sm\{A_i,A'_{i'}\}$ with $|A
     \intersect A_i| \geq nm^{- \epsilon} / 2 \alpha = \Omega(n
     m^{-2\epsilon + \delta})$.
   \item There exists $A \in \I \union \I' \sm\{A_i,A'_{i'}\}$ with $|A
     \intersect A'_{i'}| \geq nm^{- \epsilon} / 2 \alpha = \Omega(n
     m^{-2\epsilon + \delta})$.

   \end{enumerate}

   Therefore, if none of the candidates were fooling
   queries, then there are many  ``pairs'' of sets in $\I \union \I'$
   that have an intersection substantially larger than the expected
   size of $n m^{-2 \epsilon}$. This seems very unlikely. Indeed, the
   remainder of this proof will demonstrate just that.

   If none of the candidates are fooling queries, then by examining
   (1) and (2) above we deduce the following. There exists
   \footnote{Consider constructing $P$ as follows: For candidate query
     $Q= A_1 \union A'_2$, find the set in $\I \union \I' \sm
     \set{A_1,A'_2}$ with a large intersection with one of $A_1$ or
     $A'_2$ as in (1) or (2). Say for instance we find that $A_7$ has
     a large intersection with $A_1$. We include $(A_1,A_7)$ in $P$,
     mark both $A_1$ and $A_7$ as ``touched'', and designate $A_1$ a
     ``left'' node and $A_7$ a ``right'' node. Then, we repeat the
     process with some candidate $Q'= A_i \union A'_{i'}$ for some
     ``untouched'' $A_i$ and $A'_{i'}$. We keep repeating until there
     are no such candidates. Throughout this greedy process, we mark
     at most two members of $\I \union \I'$ as ``touched'' for every pair we include in
     $P$. Note that some $A_i$ may be ``touched'' more than once. As
     long as there are at least 2 untouched sets in each of $\I$ and
     $\I'$, the algorithm may continue.}
   a set of pairs $P \sse (\I \union \I') \cross (\I \union \I')$ such
   that:

   \begin{enumerate}
   \item $|P| \geq m-2 = \Omega(m)$
   \item The undirected graph with nodes $\I \union \I'$ and edges $P$ is
     bipartite. Moreover, every node in the left part has degree at
     most $1$. Thus $P$ is acyclic.
   \item If $(B,C) \in P$ then $| B \intersect C| \geq \Omega(n
     m^{-2\epsilon + \delta} )$
   \end{enumerate}

   We now proceed to bound the probability of existence of such a $P$,
   and in the process also bound the probability that none of the
   candidate queries are fooling. Recall that members of $\I \union \I'$
   are drawn i.i.d from the uniform distribution on subsets of $\X$ of
   size $nm^{-\epsilon }$. For every pair $(B,C) \in \I \union \I'$, we
   let $\R(B,C) = |B \intersect C|$ denote the size of their
   intersection. It is easy to see the random variables
   $\set{\R(B,C)}_{B,C \in \I \union \I'}$ are pairwise
   independent. Therefore, any acyclic set of pairs is mutually
   independent, by basic probability theory. Thus, if we fix a particular $P$
   satisfying (1) and (2), the probability that $P$ satisfies
   condition (3) is at most

\[ \prod_{(B,C) \in P} \Pr[ \R(B,C) \geq \Omega(n
     m^{-2\epsilon + \delta} )]  \] 

     We now want to estimate the probability that the intersection of
     $B$ and $C$ is a factor $\Omega(m^\delta)$ more than its
     expectation of $nm^{-2 \epsilon}$. Therefore, we consider an
     indicator random variable $Y_i$ for each $i \in \X$, designating
     wheter $i \in B \cap C$. If $Y_i$ were independent, we could use
     Chernoff bounds to bound the probability that $\R(B,C)$ is
     large. Fortunately, it is easy to see that the $Y_i$'s are
     negatively-correlated: i.e., for any $L \sse \set{1,\ldots,n}$,
     we have $\Pr[ \bigwedge_{i \in L} Y_{i}=1] \leq \prod_{i \in L}
     \Pr[Y_i = 1] $ . Therefore, by the result of \cite{panconesi}, if
     we ``pretend'' that they are independent by approximating their
     joint-distribution by i.i.d bernoulli random variables, we can
     still use Chernoff Bounds to bound the upper-tail
     probability. Therefore, using Chernoff bounds\footnote{We use the
       following version of the Chernoff Bound: Let $X_1,\ldots,X_n$ be independent
       bernoulli random variables, and let $X=\sum_i X_i$. If $\Ex[X]
       = \mu$ and $\Delta > 2e-1$, then $\Pr[ X > (1+\Delta) \mu ]
       \leq 2^{-\Delta \mu}$.} we deduce that the probability that the
     intersection of $B$ and $C$ is a factor $\Omega(m^\delta)$ more
     than the expectation of $nm^{-2 \epsilon}$ is at most $2^{-
       (\Omega(m^\delta) - 1)nm^{-2 \epsilon}} \leq 2^{- \Omega(nm^{-
         2 \epsilon + \delta})}$. Therefore, the probability that the
     fixed $P$ satisfies condition (3) is at most
     \begin{align*}
       \prod_{(B,C) \in P} 2^{- \Omega(nm^{- 2 \epsilon + \delta})}
     &\leq (2^{- \Omega(nm^{- 2 \epsilon + \delta})})^{|P|} \\
     &\leq 2^{- \Omega(nm^{1- 2 \epsilon + \delta})}
     \end{align*}

     Now, we can sum over all possible choices for $P$ satisfying (1)
     and (2) to get a bound on the existence of a $P$ satisfying (1),
     (2) and (3). It is easy to see that there are at most $m^m$
     choices for $P$ that satify (1) and (2). Using the union bound,
     we get the following bound on the existence of such a $P$.
     \begin{align*}
       m^m \cdot 2^{- \Omega(nm^{1- 2 \epsilon + \delta})} &\leq 2^{ m
         \log m- \Omega(nm^{1- 2 \epsilon + \delta})}\\
         &\leq 2^{-\Omega(nm^{1- 2 \epsilon + \delta})}
     \end{align*}

     Where the last inequality follows by simple algebraic
     manipulation from our assumption that $m^\epsilon \leq \sqrt{n}$
     and $\delta >0$, when $n$ and $m$ are sufficiently large.  Recall
     that, by our previous discussion, this expression also
     upperbounds the probability that none of the candidate queries
     are fooling queries. But, when $n$ and $m$ are
     sufficiently large, this is strictly smaller than $2^{-b} =
     2^{-nm^{1-2 \epsilon}}$. Thus, by our previous discussion, this
     completes the proof of Proposition
     \ref{prop:general_lb_simplified}.

\subsection{Modifying the proof for the case $\sqrt{n} \leq
  m^\epsilon$}
\label{sec:general_lb_generalize1}
We maintain the assumption that $k=1$, and show how to modify the
proof of Proposition \ref{prop:general_lb_simplified} for the case
when $\sqrt{n} \leq m^\epsilon$.
\begin{prop}\label{prop:general_lb_generalized1}
  Fix $k=1$ and parameter $\epsilon$ with $0 \leq \epsilon \leq
  1/2$. Assume $\sqrt{n}\leq m^\epsilon$. Consider any deterministic
  oracle that stores a datastructure of size at most
  $b=nm^{1-2\epsilon}$ bits. The oracle does not attain an
  approximation ratio of $O(n^{1/2 - \delta})$ for any constant
  $\delta > 0$.
\end{prop}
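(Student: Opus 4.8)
The plan is to rerun the proof of Proposition~\ref{prop:general_lb_simplified} almost verbatim, changing only the size of the random sets in the hard distribution so that the relevant ``blow-up factor'' becomes a power of $n$ rather than a power of $m$. Concretely, I would keep $k=1$ and $\X=\{1,\dots,n\}$ but redefine $D$ so that $\I=\{A_i\}_{i=1}^m$ is drawn with each $A_i$ an independent, uniformly random subset of $\X$ of size $\sqrt n$ (rounded to a positive integer); this is exactly the old distribution under the substitution $m^\epsilon\mapsto\sqrt n$, which is legitimate because in the present regime the ``effective exponent'' $\epsilon^*:=\log_m\sqrt n$ satisfies $\epsilon^*\le\epsilon$. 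The expected pairwise intersection $\Ex|A_i\cap A_{i'}|$ is now $(\sqrt n)^2/n=1$. As before I would assume, for contradiction, that the oracle attains approximation ratio $\alpha=O(n^{1/2-\delta})$.

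The earlier proof then carries over with only cosmetic changes up through the structural step. The collision lemma (Lemma~\ref{lem:collision}) is untouched: two i.i.d.\ samples from $D$ get mapped to the same datastructure with probability at least $2^{-b}$. A candidate query is again $Q=A_i\cup A'_{i'}$ with $i\ne i'$; since $A_i\subseteq Q$ we have $OPT(1,\X,\I,Q)\ge\sqrt n$, so an $\alpha$-approximate oracle must, on $Q$, name a set $A$ with $|A\cap Q|\ge\sqrt n/\alpha$, and if $A\notin\{A_i,A'_{i'}\}$ then one of $|A\cap A_i|,|A\cap A'_{i'}|$ is at least $\sqrt n/(2\alpha)=\Omega(n^\delta)$. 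Hence, by the same greedy construction as in Section~\ref{subsec:lb_fooling_likely}, if no candidate is a fooling query there is a set $P\subseteq(\I\cup\I')^2$ with $|P|=\Omega(m)$, whose graph is bipartite, acyclic, and of left-degree at most $1$, and with $|B\cap C|\ge\Omega(n^\delta)$ for every $(B,C)\in P$. The one genuinely new estimate is the tail bound: for two independent uniform $\sqrt n$-subsets $B,C$, the indicators $\{[\,i\in B\cap C\,]\}_{i\in\X}$ are negatively correlated with total mean $1$, so by the same Chernoff argument as before (via \cite{panconesi}) $\Pr[\,|B\cap C|\ge\Omega(n^\delta)\,]\le 2^{-\Omega(n^\delta)}$, with the deviation factor $n^\delta$ now playing the role that $m^\delta$ played before. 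Acyclicity of $P$ makes these events mutually independent, so a fixed $P$ meets the intersection condition with probability at most $2^{-\Omega(|P|\,n^\delta)}=2^{-\Omega(mn^\delta)}$, and summing over the $2^{O(m\log m)}$ admissible structures gives $\Pr[\text{no candidate is fooling}]\le 2^{\,O(m\log m)-\Omega(mn^\delta)}$.

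The remaining step --- and the only place the hypothesis $\sqrt n\le m^\epsilon$ enters --- is to check that this quantity is strictly below $2^{-b}$, after which a collision and a fooling query co-occur with positive probability and force the deterministic oracle to answer identically on two systems for which an $O(n^{1/2-\delta})$-approximation requires different answers: a contradiction, proving Proposition~\ref{prop:general_lb_generalized1}. The hypothesis is used because $\sqrt n\le m^\epsilon$ gives $m^{2\epsilon}\ge n$, hence $b=nm^{1-2\epsilon}\le m$ and $2^{-b}\ge 2^{-m}$. I expect this arithmetic check to be the main obstacle. In Proposition~\ref{prop:general_lb_simplified} the enumeration cost $2^{O(m\log m)}$ is automatically dominated because $m^\delta$ beats $\log m$ for every fixed $\delta>0$; here one must instead confirm that the smaller blow-up $n^\delta$ still dominates $\log m$ in the exponent, i.e.\ that $n^\delta=\omega(\log m)$ --- a mild growth relation between $n$ and $m$, in the spirit of the ``$n,m$ sufficiently large'' caveat already present in the proof of Proposition~\ref{prop:general_lb_simplified}. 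Granting it, $2^{O(m\log m)-\Omega(mn^\delta)}\le 2^{-\Omega(mn^\delta)}<2^{-m}\le 2^{-b}$ and the argument closes; everything upstream is a transcription of the $m^\epsilon\le\sqrt n$ proof with $m^\epsilon$ replaced by $\sqrt n$ throughout.
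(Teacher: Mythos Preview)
Your proposal is correct and follows essentially the same approach as the paper: change the set sizes in $D$ from $nm^{-\epsilon}$ to $\sqrt{n}$, rerun the collision/fooling-query argument verbatim, arrive at the bound $2^{-\Omega(mn^\delta)}$ on the probability that no candidate is fooling, and finish by observing that $\sqrt{n}\le m^\epsilon$ forces $b=nm^{1-2\epsilon}\le m$ so that $2^{-\Omega(mn^\delta)}<2^{-b}$. Your identification of the residual condition $n^\delta=\omega(\log m)$ (needed to absorb the $m^m$ enumeration of $P$) is in fact more explicit than the paper, which simply asserts the final bound $2^{-mn^\delta}$ under the same ``$n,m$ sufficiently large'' convention used in the proof of Proposition~\ref{prop:general_lb_simplified}.
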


Instead of replicating almost the entire proof
of Proposition \ref{prop:general_lb_simplified}, we instead point out the key
changes necessary to yield a proof of
\ref{prop:general_lb_generalized1} and leave the rest as an easy
excercise for the reader.

The proof proceeds almost identically to the proof of Proposition
\ref{prop:general_lb_simplified}, with the following main changes:

\begin{itemize}
\item {\bf Modifications to Section \,\ref{subsec:lb_distribution}}: When defining $D$, we
  let each $A_i$ be a subset of $\X$ of size $\sqrt{n}$ instead of
  $nm^{-\epsilon}$.
\item We perform similar calculations throughout, accomodating the
  above modification to the size of $A_i$.
\item {\bf Modifications to Section \,\ref{subsec:lb_fooling_likely}}: We eventually
  arrive at an upper bound of $2^{-mn^\delta}$ on the probability that
  none of the candidate queries are fooling. Using the assumption
  $m^\epsilon \geq \sqrt{n}$ and the fact that $b=nm^{1-2\epsilon}$, a
  simple algebraic manipulation shows that this bound is stricly less
  than $2^{-b}$. This completes the proof, as before.
\end{itemize}

\subsection{Modifying the proof for  arbitrary $k$}
\label{sec:general_lb_generalize2}

In this section, we generalize Proposition \,\ref{prop:general_lb_simplified} to
arbitrary $k$. The generalization of Proposition
\,\ref{prop:general_lb_generalized1} to arbitrary $k$ is essentially
identical, and therefore we leave it as an exercise for the reader. We
now state the generalization of Proposition
\,\ref{prop:general_lb_simplified} to arbitrary $k$.

\begin{prop}\label{prop:general_lb_generalized2}
  Let parameter $\epsilon$ be such that $0 \leq \epsilon \leq 1/2$. Assume
  $m^\epsilon \leq \sqrt{n}$. Consider any deterministic oracle that
  stores a datastructure of size at most $b=nm^{1-2\epsilon}$
  bits. The oracle does not attain an approximation ratio of
  $O(\frac{m^{\epsilon - \delta}}{k \sqrt{k}})$ for any constant $\delta > 0$.
\end{prop}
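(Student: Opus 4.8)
The plan is to follow the template of the proof of Proposition~\ref{prop:general_lb_simplified} line by line, carrying along the extra slack created by the fact that the oracle may now return $k$ sets rather than one; this slack is precisely what degrades the bound by the factor $k\sqrt k$. Assume for contradiction that the oracle attains approximation ratio $\alpha = O\!\bigl(\tfrac{m^{\epsilon-\delta}}{k\sqrt k}\bigr)$ (we may restrict to the non-vacuous range, where $\alpha \ge 1$ and hence $k \le m^{\epsilon}$). Keep the distribution $D$ from Section~\ref{subsec:lb_distribution} unchanged: each of the $m$ sets $A_i$ is a uniformly random subset of $\X$ of size $nm^{-\epsilon}$, and draw two systems $(\X,\I)$ and $(\X,\I')$ i.i.d.\ from $D$. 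Lemma~\ref{lem:collision} does not refer to $k$, so a collision still occurs with probability at least $2^{-b}$. Only the notion of candidate/fooling query and the concluding tail estimate require changes.

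For $k=1$ the candidate queries were $A_i \cup A'_{i'}$; the natural $k$-analogue is $Q = \bigl(\bigcup_{\ell=1}^{k} A_{i_\ell}\bigr) \cup \bigl(\bigcup_{\ell=1}^{k} A'_{i'_\ell}\bigr)$ for $2k$ distinct indices, chosen so that $\mathrm{OPT}_{\I}(Q) = \Theta(k\,nm^{-\epsilon}) = \Theta(\mathrm{OPT}_{\I'}(Q))$ — the two systems are ``balanced'', and the optimum genuinely wants $k$ sets. Under a collision the oracle returns a single $\J$ with $|\J|\le k$, and if $Q$ is not fooling then $\J$ must be $\alpha$-good for $\I$ and for $\I'$ simultaneously. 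Since $|\J|\le k$, being $\alpha$-good for $\I$ forces one of two things: either (i) $\J$ contains at least $\Omega(k/\alpha)$ of the ``distinguished'' indices $i_1,\dots,i_k$, or (ii) some set of $\J$ that is not one of the obvious sets $A_{i_\ell}, A'_{i'_\ell}$ meets $Q$ in $\Omega(\mathrm{OPT}_\I/(k\alpha)) = \Omega(k^{3/2}\,nm^{-2\epsilon+\delta})$ items, i.e.\ a factor $\Omega(k^{3/2}m^{\delta})$ above the expected intersection $nm^{-2\epsilon}$ of two independent sets from $D$ — a ``surprising pair''. The symmetric dichotomy holds for $\I'$.

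The delicate case, and the one I expect to be the main obstacle, is (i): a priori the oracle could ``hedge'', spending $\approx k/\alpha$ of its output slots on genuine distinguished sets of $\I$ and another $\approx k/\alpha$ on distinguished sets of $\I'$, which is slot-feasible as soon as $\alpha$ exceeds a small constant. Ruling this out amounts to showing that, for most candidate queries, recovering $\Omega(k/\alpha)$ of the distinguished indices of $\I$ from $Q$ together with the shared datastructure is impossible when $b$ is this small — exactly the phenomenon underlying the $k=1$ argument, now applied quantitatively, and it is this additional counting that is responsible for the extra powers of $k$. Granting it, one finishes as in Section~\ref{subsec:lb_fooling_likely}: if no candidate is fooling then, by the greedy extraction of that section's footnote, there is an acyclic family $P$ of $\Omega(m)$ pairs of sets in $\I\cup\I'$ each with intersection $\Omega(k^{3/2}nm^{-2\epsilon+\delta})$; by pairwise independence of intersection sizes and the negative-correlation/Chernoff bound, a fixed such $P$ occurs with probability at most $2^{-\Omega(k^{3/2}nm^{1-2\epsilon+\delta})}$, and the union bound over the at most $m^m$ choices of $P$ — absorbed using $m^{\epsilon}\le\sqrt n$ and $\delta>0$ — keeps the total below $2^{-b}=2^{-nm^{1-2\epsilon}}$. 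Since a collision already has probability at least $2^{-b}$, a collision and a fooling candidate coexist with positive probability, which is the required contradiction.
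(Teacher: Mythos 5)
Your proposal has a genuine gap, and it is precisely the one you flag yourself in case (i) and then ask the reader to grant. With only two sampled set systems $(\X,\I)$ and $(\X,\I')$ and a candidate query $Q=(\bigcup_{\ell=1}^k A_{i_\ell})\cup(\bigcup_{\ell=1}^k A'_{i'_\ell})$, the hedging answer is not merely ``delicate'' --- it simply works for the oracle. The index set $\J=\set{i_1,\ldots,i_{\lceil k/2\rceil}}\cup\set{i'_1,\ldots,i'_{\lfloor k/2\rfloor}}$ has size at most $k$ and, read in either set system, covers about half of that system's optimum on $Q$, so such a candidate is a priori never fooling for any $\alpha\geq 2$. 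Crucially, this is not a question of whether the oracle can ``recover the distinguished indices of $\I$ from $Q$ and the datastructure'' (your proposed way out): a fooling query must be one for which \emph{no} common answer of size $k$ is good for both systems, and here a good common answer exists regardless of what the datastructure encodes. So no information-theoretic counting within the two-system framework can rescue the argument; the construction itself must change.

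The paper's proof resolves exactly this by sampling $2k+1$ set systems $(\X,\I^1),\ldots,(\X,\I^{2k+1})$ i.i.d.\ from $D$ (with the collision probability correspondingly dropping to $2^{-2kb}$), enlarging each random set to size $nm^{-\epsilon}\sqrt{k}$, and defining a candidate query as $Q=A^{\ell_1}_{i_1}\cup\cdots\cup A^{\ell_{k+1}}_{i_{k+1}}$ with the $k+1$ system labels $\ell_j$ distinct and the $k+1$ indices $i_j$ distinct. Since the colliding oracle must return a single index set $\J$ with $|\J|\leq k$ for all $2k+1$ systems, pigeonhole guarantees some component index $i_j\notin\J$; and because the $\ell_j$ are distinct, every set of system $\I^{\ell_j}$ indexed by $\J$ is ``generic'' with respect to the construction of $Q$, forcing a surprising pairwise intersection of size $\Omega(nm^{-2\epsilon+\delta}\sqrt{k})$ if $Q$ is not fooling. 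This yields a family $P$ of $\Omega(km)$ acyclic pairs, a per-$P$ probability of $2^{-\Omega(knm^{1-2\epsilon+\delta})}$, and a union bound over at most $(km)^{km}$ choices that beats $2^{-2kb}$. The rest of your outline (reusing Lemma~\ref{lem:collision}, pairwise independence, negative correlation plus Chernoff, the $m^\epsilon\leq\sqrt n$ absorption) matches the paper, but without the multi-system construction the central step does not go through.
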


The proof of Proposition \ref{prop:general_lb_generalized2} follows
the outline of the proof of Proposition
\,\ref{prop:general_lb_simplified}. The necessary modifications to the
proof of Proposition \,\ref{prop:general_lb_simplified} are as follows:

\begin{itemize}
\item {\bf Modifications to Section \,\ref{subsec:lb_distribution}}: We define
  distribution $D$ as before, except that we let each $A_i$ be a
  subset of $\X$ of size $nm^{-\epsilon} \sqrt{k}$.
\item {\bf Modifications to Section \,\ref{subsec:lb_sampling_collisions}}: Instead of
  sampling from $D$ twice, we sample $2k+1$ times to get set systems
  $(\X,\I^1), (\X,\I^2),\ldots,(\X,\I^{2k+1})$. This changes the probability
  of collision of Lemma \,\ref{lem:collision} to $2^{-2kb} $. Here,
  collision means that all $2k+1$ samples from $D$ are stored as the
  same datastructure by the static stage of the oracle.
\item {\bf Modifications to Section \,\ref{subsec:lb_fooling}}: We now define a
  \emph{fooling query} analogously for general $k$: A query $Q$ is
  fooling if there is no single index $i$ such that returning the
  $i$'th set gives a good approximation for all the set systems
  $(\X,\I^1)$, \ldots $(\X,\I^{2k+1})$.

  Moreover, we analogously define \emph{candidate queries}: We use
  $A^a_b$ to denote the $b$'th set in set system $(\X,\I^a)$. We say $Q
  \sse \X$ is a candidate if $Q=A^{\ell_1}_{i_1} \union
  A^{\ell_2}_{i_2} \union \ldots \union A^{\ell_{k+1}}_{i_{k+1}}$,
  where indices $\ell_1,\ldots,\ell_{k+1}$ are distinct, and indices
  $i_1,\ldots,i_{k+1}$ are distinct. In other words, $Q$ is a fooling
  query if it is the union of $k+1$ sets from $k+1$ distinct set
  systems and $k+1$ distinct indices in those set systems.
\item {\bf Modifications to Section \,\ref{subsec:lb_fooling_likely}}: Similarly, if a
  candidate $Q=A^{\ell_1}_{i_1} \union \ldots \union
  A^{\ell_{k+1}}_{i_{k+1}}$ is not a fooling query, then there is some
  $A \in (\I^1 \union \ldots \I^{2k+1}) \sm \set{A^{\ell_j}_{i_j}}_j$ with $|A \intersect Q|\geq
  nm^{-\epsilon}/\alpha$. 
  Therefore, for one of the components $A^{\ell_j}_{i_j}$ of $Q$ we
  have that $|A \intersect A^{\ell_j}_{i_j}| \geq nm^{-\epsilon}/k
  \alpha$. Plugging in the approximation ratio $\alpha=m^{\epsilon -
    \delta}/k\sqrt{k}$ we have that $|A \intersect A^{\ell_j}_{i_j}|
  \geq nm^{-2 \epsilon +\delta} \sqrt{k}$. It is not too hard to see
  that we can construct $P$ similarly with

   \begin{enumerate}
   \item $|P| \geq k(m-k) = \Omega(km)$.\footnote{This is not true when $k$
         is almost equal to $m$. However, the theorem becomes
         trivially true when
         $k>m^{1/6}$, so we can without loss assume that $k$ is not
         too large.}
   \item The undirected graph with nodes $\I \union \I'$ and edges $P$ is
     bipartite. Moreover, every node in the left part has degree at
     most $1$. Thus $P$ is acyclic.
   \item If $(B,C) \in P$ then $| B \intersect C| \geq \Omega(n
     m^{-2\epsilon + \delta} \sqrt{k} )$
   \end{enumerate}

   Continuing with the remaining calculations in this section almost
   identically gives a bound of
   $2^{-\Omega(knm^{1-2\epsilon+\delta})}$ on the probability of
   existance of a fixed $P$. The number of such $P$ is at most
   $(km)^{km}$, therefore a similar calculation gives a bound
   of \[2^{-\Omega(knm^{1-2\epsilon+\delta})} =
   2^{-\Omega(kbm^\delta)}\] on the \emph{existence} of any such $P$.
   As before, this completes the proof.
\end{itemize}

\section{Conclusions and Future Work}
\label{sec:conclusions}

This paper introduced and studied a fundamental problem, called SDC, arising in many large-scale Web applications. A summary of results obtained by the paper appear in Table~\ref{table:results_summary} (Section~\ref{subsec:summary}). The main specific open question that arises is whether there is a deterministic oracle that is as good as the randomized oracle proposed in Section~\ref{sec:general_ub}. More generally, a detailed analysis of practical subclasses of SDC seems to hold promise.

\section*{Acknowledgements}
We thank Philip Bohannon, Hector Garcia-Molina, Ashwin Machavanajhhaala, Tim Roughgarden, and Elad Verbin for insightful discussions. 


{\small
\bibliographystyle{plain}
\bibliography{covering.bib}

\begin{thebibliography}{10}

\bibitem{tuzhilin}
Gediminas Adomavicius and Er~Tuzhilin.
\newblock Toward the next generation of recommender systems: A survey of the
  state-of-the-art and possible extensions.
\newblock {\em IEEE TKDE}, 17, 2005.

\bibitem{alonspencer}
N.~Alon and J.~Spencer.
\newblock {\em The Probabilistic Method}.
\newblock John Wiley, 1992.

\bibitem{AAASTOC2003}
Y.~Azar ÁÍÄ N. Alon ÁÍÄ~B. Awerbuch.
\newblock The online set cover problem.
\newblock In {\em STOC}, 2003.

\bibitem{tqd}
Francesco Bonchi, Carlos Castillo, Debora Donato, and Aristides Gionis.
\newblock Topical query decomposition.
\newblock In {\em KDD}, 2008.

\bibitem{CG98}
Jaime~G. Carbonell and Jade Goldstein.
\newblock The use of mmr, diversity-based reranking for reordering documents
  and producing summaries.
\newblock In {\em SIGIR}, 1998.

\bibitem{CSTOC2002}
M.~Charikar.
\newblock Similarity estimation techniques from rounding algorithms.
\newblock In {\em STOC}, 2002.

\bibitem{CK06}
Harr Chen and David~R. Karger.
\newblock Less is more: probabilistic models for retrieving fewer relevant
  documents.
\newblock In {\em SIGIR}, 2006.

\bibitem{genomics}
Nello Cristianini and Matthew~W. Hahn.
\newblock {\em Introduction to Computational Genomics}.
\newblock Cambridge Unversity Press, 2006.

\bibitem{garey}
M.~R. Garey and D.~S. Johnson.
\newblock {Computers and Intractability}.
\newblock {\em W. H. Freeman and Company}, 1979.

\bibitem{NBESA2005}
YJ. Naor and N.~Buchbinder.
\newblock Online primal-dual algorithms for covering and packing problems.
\newblock In {\em ESA}, 2005.

\bibitem{nwf78}
G.~L. Nemhauser, L.~A. Wolsey, and M.~L. Fisher.
\newblock An analysis of approximations for maximizing submodular set functions
  -- {I}.
\newblock {\em Mathematical Programming}, 14(3):265--294, 1978.

\bibitem{panconesi}
Alessandro Panconesi and Aravind Srinivasan.
\newblock Randomized distributed edge coloring via an extension of the
  chernoff-hoeffding bounds.
\newblock {\em SIAM J. Comput.}, 26(2):350--368, 1997.

\bibitem{SGSDM2009}
B.~Saha and L.~Getoor.
\newblock On maximum coverage in the streaming model and application to
  multi-topic blog-watch.
\newblock In {\em SDM}, 2009.

\bibitem{thorupzwick}
Mikkel Thorup and Uri Zwick.
\newblock Approximate distance oracles.
\newblock {\em J. ACM}, 52(1):1--24, 2005.

\end{thebibliography}
}

\end{document}